\documentclass[12pt,a4paper]{article}
\usepackage{caption}
\usepackage[utf8]{inputenc}
\usepackage[T1]{fontenc}
\usepackage[british]{babel}
\usepackage[a4paper,left=2.4cm,right=2.4cm,top=2.5cm,bottom=2.5cm]{geometry}
\usepackage{amsmath,amssymb,amsthm,mathtools}
\usepackage{graphicx}
\usepackage{hyperref}
\usepackage[capitalize,noabbrev]{cleveref}
\usepackage{booktabs}
\usepackage[expansion=false]{microtype}
\usepackage{xcolor}
\usepackage{bm}
\usepackage{enumitem}
\usepackage{algorithm}
\usepackage{algpseudocode}
\usepackage{titlesec}
\usepackage[numbers,sort&compress,square]{natbib}
\hypersetup{colorlinks,linkcolor=blue!50!black,citecolor=blue!50!black,
            urlcolor=blue!50!black,breaklinks=true}
\titleformat*{\section}{\large\bfseries}
\titleformat*{\subsection}{\normalsize\bfseries}
\titleformat*{\subsubsection}{\normalsize\itshape}
\newtheorem{theorem}{Theorem}[section]
\newtheorem{proposition}[theorem]{Proposition}
\newtheorem{lemma}[theorem]{Lemma}
\newtheorem{corollary}[theorem]{Corollary}
\theoremstyle{definition}

\newtheorem{assumption}[theorem]{Assumption}
\newtheorem{remark}[theorem]{Remark}

\DeclareMathOperator{\im}{Im}

\DeclareMathOperator{\diag}{diag}

\DeclareMathOperator{\spec}{spec}
\DeclareMathOperator{\cond}{cond}
\newcommand{\R}{\mathbb{R}}
\newcommand{\C}{\mathbb{C}}

\newcommand{\Hil}{\mathcal{H}}
\newcommand{\norm}[1]{\left\lVert #1 \right\rVert}
\newcommand{\inner}[2]{\langle #1, #2 \rangle}
\newcommand{\dd}{\,\mathrm{d}}
\newcommand{\eps}{\varepsilon}
\newcommand{\smin}{\sigma_{\mathrm{min}}}
\newcommand{\smax}{\sigma_{\mathrm{max}}}
\newcommand{\Geps}{\Gamma_{\!\eps}}
\newcommand{\bdf}{\bm{c}}
\newcommand{\Mc}{\mathcal{M}_{\!C}}
\title{\bfseries
  Tikhonov-regularised projected gradient flow for
  equality-constrained bilinear quantum control}
\author{Tanveer Ahmad\textsuperscript{1,\,*}%
  \\[6pt]
  \normalsize\textsuperscript{1}\textit{Hunan Key Laboratory of
  Super-MicroStructure and Ultrafast Process,}\\
  \normalsize\textit{School of Physics, Central South University,
  Changsha 410083, China}\\[4pt]
  \normalsize\textsuperscript{*}\href{mailto:tanveer.quantum@gmail.com}%
  {\texttt{tanveer.quantum@gmail.com}}}
\date{(Dated: \today)}
\begin{document}
\maketitle
\begin{abstract}
\noindent
We study a projection-type gradient flow for the equality-constrained
maximisation of a smooth bilinear control objective on
$\Hil=L^2(0,T;\R)$, in which the Lagrange multipliers are eliminated
through an $(M{+}1)\times(M{+}1)$ moving Gram matrix
$\Gamma(s)_{\ell\ell'}=\int_0^T S(t)\,c_\ell(s,t)\,c_{\ell'}(s,t)\dd t$
assembled from the gradient of the objective and the gradients of the
$M$ scalar constraints. The flow generates monotonic ascent in
continuous-$s$ time but becomes unstable on discretisation, and
existing implementations rely on heuristic step-size safeguards whose
mathematical content has not been fully analysed. We close this gap by
replacing $\Gamma$ with the Tikhonov-regularised matrix
$\Geps:=\Gamma+\eps^{2}I$, $\eps\ge 0$, and we prove the following.
First, an exact spectral identity yields the conditioning formula
$\cond(\Geps)=(\smax^{2}+\eps^{2})/(\smin^{2}+\eps^{2})$. Second,
objective monotonicity $\dd J/\dd s\ge 0$ persists for every
$\eps\ge 0$. Third, an exact constraint-drift identity gives
$|h_{m}[E_{\eps}(s)]-C_{m}|=\mathcal{O}(\eps^{2})$ with a computable
prefactor. Fourth, the regularised trajectory converges to the
unregularised one in $L^{2}(0,T)$ at rate $\mathcal{O}(\eps^{2})$,
under uniform invertibility of $\Gamma$. Fifth, a discrete
Courant--Friedrichs--Lewy criterion
$\Delta s\,G\,\norm{\Geps^{-1}}\le \alpha<2$ guarantees objective
monotonicity of the forward-Euler scheme up to
$\mathcal{O}(\Delta s^{2})$ local truncation error. The theory is
validated on a three-level bilinear benchmark used for all-optical
preparation of two-atom Bell states, where the condition number of
$\Gamma(s)$ persistently lies in the band $10^{9}$--$10^{11}$, the
predicted $\eps^{2}$ rate is reproduced over eight decades in $\eps$,
and moderate regularisation removes step rejections and reduces
constraint drift by more than an order of magnitude at unchanged
final fidelity.
\end{abstract}

\medskip
\noindent\textbf{Keywords}: quantum optimal control; constrained
bilinear control; projected gradient flow; Tikhonov regularisation;
moving Gram matrix; Bell-state preparation; ill-conditioning;
forward-Euler stability.

\medskip

\bigskip

\section{Introduction}\label{sec:intro}

\subsection{Bilinear quantum control and the role of constraints}

The coherent steering of quantum systems by tailored external fields
has, over the past three decades, grown from a theoretical curiosity
into a working tool of quantum technology. Across applications as
varied as the laser control of chemical reactions, nuclear magnetic
resonance pulse design, the manipulation of trapped-ion and
superconducting qubits, and the engineering of entangling gates on
neutral-atom processors, the central mathematical model has remained
strikingly uniform. The state $\psi(t)\in\C^d$ of a $d$-level quantum
system evolves under a Hamiltonian that depends linearly on a scalar
control $E\colon[0,T]\to\R$,
\begin{equation}\label{eq:tdse-intro}
  \mathrm i\,\dot{\psi}(t)
  =\bigl(H_{0}-\mu\,E(t)\bigr)\psi(t),
  \qquad \psi(0)=\psi_{0},
\end{equation}
with $H_{0}$ the field-free Hamiltonian and $\mu$ the transition-dipole
operator, both Hermitian. Equation~\eqref{eq:tdse-intro} is bilinear
in the sense of \cite{DAlessandro2008}: the right-hand side is linear
in the state and linear in the control, but their product
couples the two factors. The mathematical theory of bilinear systems
in finite dimensions, including controllability and the Pontryagin
maximum principle, is by now classical
\cite{HinzePinnauUlbrichUlbrich2009,Troltzsch2010,
PeirceDahlehRabitz1988,BoscainSigalottiSugny2021}.
A typical control objective is the maximisation of the terminal
fidelity $J[E]=|\inner{\psi_{f}}{U_{E}(T)\psi_{0}}|^{2}$ towards a
prescribed target state $\psi_{f}\in\C^{d}$, where
$U_{E}\colon[0,T]\to\mathrm{U}(d)$ is the unitary propagator generated
by~\eqref{eq:tdse-intro}. Adjoint-state techniques yield the gradient
$\delta J/\delta E$ in closed form
\cite{PeirceDahlehRabitz1988,BorziSchulz2012}, and a substantial
collection of gradient-based and homotopy methods has been developed
to maximise $J$, among them the Krotov family
\cite{PalaoKoch2002,ReichPalaoKoch2012,GoerzKrotovPython2019}, GRAPE
\cite{KhanejaGRAPE2005,MachnesRobustControl2018}, the Pontryagin
maximum principle in its quantum form
\cite{BoscainSigalottiSugny2021}, and the gradient projection method
\cite{GPMQuantum2009,MorzhinPechen2025}. Comprehensive surveys on
status, methodology, and applications are
\cite{GlaserQuantumControl2015,WerschnikGross2007,KochQOCTRoadmap2022}.

In real-world pulse engineering the unconstrained maximisation of $J$
is rarely the right problem statement. Physical and engineering
considerations impose additional integral identities that the
admissible control field must satisfy. Three constraints recur in the
literature, and they will serve as a running example throughout this
paper.

\paragraph*{Zero pulse area.}
Maxwell's equations in vacuum imply that any electromagnetic pulse
propagating in free space has vanishing time integral,
$\int_{0}^{T}E(t)\dd t=0$. Pulses with non-vanishing area are
artefacts of the slowly-varying envelope approximation and acquire
unphysical static components on long timescales~\cite{SugnyConstraints2014}.
Imposing zero area at the optimisation stage is therefore not a
mathematical convenience but a hard physical requirement.

\paragraph*{Fixed fluence.}
The fluence $\int_{0}^{T}E^{2}(t)\dd t$ measures the total energy
delivered to the controlled system per unit area. In the laboratory
the fluence is constrained by hardware: damage thresholds in
solid-state amplifiers, gain saturation in oscillators, finite pump
power. In open quantum systems the same quantity sets the heating
budget of the controlled subsystem, and in cavity QED it appears
explicitly in the photon-number cost of a control protocol.

\paragraph*{Fixed area against a reference oscillation.}
The integral $\int_{0}^{T}E(t)\cos(\omega_{\mathrm r}t)\dd t$ controls
the leading-order Rabi rotation against a chosen resonance
$\omega_{\mathrm r}$ in the rotating-wave approximation. Locking this
quantity to a target value, typically $\pi/2$ or $\pi$, fixes the
zeroth-order action of the control on the dominant transition while
leaving the higher-order corrections free for optimisation
\cite{ShuHoRabitz2016,GuoLuoMaShu2019}.

Each of these identities is a smooth functional
$h_{m}\colon\Hil\to\R$, $\Hil:=L^{2}(0,T;\R)$, and admissible controls
live on the finite-codimension manifold
\begin{equation}\label{eq:Mc-intro}
  \Mc=\bigl\{E\in\Hil : h_{m}[E]=C_{m},\ m=1,\dots,M\bigr\}.
\end{equation}
The constrained problem is to maximise $J$ over $\Mc$. The mathematical
literature on optimisation under integral equality constraints in
function space is mature
\cite{Troltzsch2010,HinzePinnauUlbrichUlbrich2009,BorziSchulz2012,
NocedalWright2006,UlbrichSemismooth2011}; constrained quantum
optimisers in particular have been developed in
\cite{LapertSugny2009,SugnyConstraints2014,
ShuHoRabitz2016,ShuHoXingRabitz2016,GPMQuantum2009,MorzhinPechen2025}.

\subsection{The projection flow and its discrete pathologies}

A particularly elegant continuous-time treatment of the constrained
maximisation problem was introduced in \cite{ShuHoRabitz2016}. The
construction proceeds as follows. Set
$c_{0}(E;t):=\delta J/\delta E(t)$,
$c_{m}(E;t):=\delta h_{m}/\delta E(t)$ for $m=1,\dots,M$, and group the
gradients into the column vector $\bdf(E;t)\in\R^{M+1}$. Choose a
smooth gating envelope $S\in C^{2}([0,T];\R_{\ge 0})$ vanishing at
the endpoints $t=0,T$, and define the moving Gram matrix
\begin{equation}\label{eq:Gamma-intro}
  \Gamma(s)_{\ell\ell'}
  =\int_{0}^{T}\!S(t)\,c_{\ell}\bigl(E(s,\cdot);t\bigr)\,
                       c_{\ell'}\bigl(E(s,\cdot);t\bigr)\dd t,
  \qquad 0\le \ell,\ell'\le M.
\end{equation}
Parametrising the control by a scalar morphing variable $s\ge 0$, the
projection flow takes the form
\begin{equation}\label{eq:flow-intro}
  \frac{\partial E(s,t)}{\partial s}
  =S(t)\,g_{0}(s)\sum_{\ell=0}^{M}\bigl[\Gamma(s)^{-1}\bigr]_{0\ell}\,
                                  c_{\ell}\bigl(E(s,\cdot);t\bigr),
\end{equation}
with the scalar
$g_{0}(s)=\Gamma(s)_{00}=\int_{0}^{T}S\,c_{0}^{2}\dd t$. Two
structural properties make \eqref{eq:flow-intro} attractive: the
objective is non-decreasing along trajectories,
$\dd J/\dd s\ge 0$, and the constraint values $h_{m}$ are exactly
preserved in the continuous-$s$ limit, $\dd h_{m}/\dd s\equiv 0$ for
$m\ge 1$. The construction sits inside the broader
\emph{D-MORPH} (diffeomorphic modulation under
observable-response-preserving homotopy) family of homotopy methods
on quantum control landscapes
\cite{Rothman2005a,Rothman2005b,MooreRabitz2011}, and it has been used
to solve constrained pulse-design problems in molecular orientation,
phase-locked state transfer, and atomic-state engineering
\cite{ShuHoXingRabitz2016,GuoDongShu2018,GuoLuoMaShu2019}.

The continuous-time properties of \eqref{eq:flow-intro} are, however,
only half of the story. Forward integration in $s$ requires inverting
the moving Gram matrix at every step, and when the constraint
gradients $c_{1},\dots,c_{M}$ approach linear dependence in the
$S$-weighted $L^{2}$-inner product, $\Gamma(s)$ becomes severely
ill-conditioned. The phenomenon is generic, not exotic: in the
benchmark we analyse below, $\cond(\Gamma(s))$ remains in the band
$10^{9}$--$10^{11}$ throughout every run, regardless of the chosen
parameter regime. Practical implementations respond pragmatically,
reducing the step size $\Delta s$ by a fixed factor (usually ten)
whenever the discrete update fails to increase the objective; this
remedy was already adopted in \cite{GuoLuoMaShu2019}. Such a
strategy is effective in routine cases, yet three drawbacks of the
heuristic deserve attention. It requires an explicit
objective-decrease probe at every step, doubling the cost when
rejections are frequent; it provides no quantitative information
about how small $\Delta s$ must become; and, most importantly for the
present paper, it conceals two natural mathematical questions:
\begin{enumerate}[leftmargin=2em,itemsep=2pt,label=(Q\arabic*)]
\item\label{Q1}
  How does the spectrum of $\Gamma(s)$ control the admissible step
  size of any forward integrator of \eqref{eq:flow-intro}, and how
  severely is $\Gamma(s)$ ill-conditioned across representative
  problems?
\item\label{Q2}
  Is there a structured modification of \eqref{eq:flow-intro} that
  preserves objective monotonicity, replaces the heuristic
  step-halving by a transparent stability mechanism, and admits a
  quantitative error analysis?
\end{enumerate}

\subsection{Tikhonov regularisation of the moving Gram matrix}

The aim of this paper is to address~\ref{Q1} and~\ref{Q2}
simultaneously, by replacing $\Gamma(s)$ in \eqref{eq:flow-intro} with
its Tikhonov-regularised counterpart
\begin{equation}\label{eq:Geps-intro}
  \Geps(s):=\Gamma(s)+\eps^{2}I_{M+1},\qquad \eps\ge 0,
\end{equation}
and analysing the resulting one-parameter family of flows. The
classical Tikhonov technique is, of course, ubiquitous in inverse
problems and ill-conditioned linear algebra
\cite{TikhonovArsenin1977,EnglHankeNeubauer1996,
HansenRankDeficient,GolubVanLoan,Hansen2010,Calvetti2003,
ProjectedIteratedTikhonov2025};
its role in stabilising forward integration of dynamical systems with
operator-valued constraints has, by contrast, received less attention.
Recent convergence theory for stochastic optimisation with decaying
regularisation \cite{ControllingTheFlow2025} provides a partial
template, but the moving Gram matrix that drives \eqref{eq:flow-intro}
introduces a coupling between the regularisation level and the
dynamical state of the system that is not present in classical
inverse problems.

The contribution of this paper is twofold. First, we work out the
exact spectral, monotonic, and convergence consequences of the
substitution $\Gamma\mapsto\Geps$ on the moving Gram matrix. Second,
we translate the analytical results into a discrete
Courant--Friedrichs--Lewy criterion that converts the heuristic
step-halving of \cite{GuoLuoMaShu2019} into a transparent stability
mechanism. Five results, summarised informally below, encode the
analysis.
\begin{itemize}[leftmargin=2em,itemsep=2pt]
\item[(R1)]
  \emph{Conditioning identity.} The spectrum of $\Geps(s)$ equals
  $\{\sigma_{k}(s)^{2}+\eps^{2}\}_{k=0}^{M}$, with $\sigma_{k}(s)$ the
  singular values of the assembly operator $C_{S}(s)$, hence
  $\cond(\Geps(s))=(\smax(s)^{2}+\eps^{2})/(\smin(s)^{2}+\eps^{2})$
  (Theorem~\ref{thm:cond}).
\item[(R2)]
  \emph{Persistence of monotonic ascent}: $\dd J/\dd s\ge 0$ along
  the regularised flow, for every $\eps\ge 0$
  (Theorem~\ref{thm:mono}).
\item[(R3)]
  \emph{Exact constraint-drift identity.}
  $h_{m}[E_{\eps}(s)]-C_{m}
  =-\eps^{2}\int_{0}^{s}g_{0}^{(\eps)}(\sigma)
                         [\Geps(\sigma)^{-1}]_{m0}\dd\sigma$,
  yielding $\mathcal{O}(\eps^{2})$ drift bounds with computable
  prefactors (Theorem~\ref{thm:drift}).
\item[(R4)]
  \emph{$L^{2}$-convergence at rate $\mathcal{O}(\eps^{2})$}, under
  uniform invertibility of the unregularised Gram matrix
  (Theorem~\ref{thm:conv}).
\item[(R5)]
  \emph{Discrete CFL criterion}
  $\Delta s\,G\,\norm{\Geps^{-1}}\le \alpha<2$ guaranteeing
  forward-Euler monotonicity up to $\mathcal{O}(\Delta s^{2})$ local
  truncation error (Theorem~\ref{thm:euler}).
\end{itemize}
The structural content of (R5) is direct, and accounts for the
practical effectiveness of the regularised scheme: increasing $\eps$
enlarges $\smin(\Geps)^{2}=\smin(\Gamma)^{2}+\eps^{2}$, hence shrinks
$\norm{\Geps^{-1}}$, hence enlarges the admissible step size. The
heuristic step-halving used in \cite{GuoLuoMaShu2019} is, in this
light, an indirect search for a step size compatible with the CFL
bound; the regularised flow encodes the same effect transparently in
the parameter~$\eps$.

\subsection{Physical relevance}

The mathematical results stated above translate directly into
practical guidance for pulse engineers. The condition number of the
moving Gram matrix measures the degree to which the imposed
constraints are independent in the $S$-weighted inner product;
in our three-constraint benchmark of Section~\ref{sec:numerics} ---
modelling all-optical generation of a Bell state in two dipole-dipole
coupled atoms --- the constraint gradients are nearly collinear, and
$\cond(\Gamma(s))\sim 10^{10}$ throughout the trajectory. In this
regime the unregularised algorithm exhibits two distinctive failure
modes: discretisation-induced violation of the nonlinear (fluence)
constraint by tens of per cent, and frequent step rejections that
slow convergence. Theorem~\ref{thm:euler} explains both: the smallest
admissible step decreases as $\smin^{-2}$, and any moderately
aggressive choice of $\Delta s$ pushes the iteration outside the
CFL margin. Theorem~\ref{thm:drift} quantifies the cost of repairing
this failure by regularisation: the constraint drift introduced by
non-zero $\eps$ is exactly $\mathcal{O}(\eps^{2})$, with a prefactor
that may be computed at the running iterate. Choosing $\eps$ in the
interval $10^{-3}$--$10^{-2}$, suitably non-dimensionalised by the
spectral scale of the problem, yields a transparent CFL bound,
removes step rejections, and reduces accumulated fluence drift from
$\sim\!38\%$ to $\sim\!3\%$ at unchanged final fidelity
$J\ge 0.9999$.

The applicability of this framework is broader than the running
example. Constrained pulse design is a recurring requirement in
neutral-atom quantum technology
\cite{Saffman2016,BrowaeysLahaye2020,Evered2023,Bluvstein2024}, where
Rydberg-mediated entangling gates with fidelities approaching $99.5\%$
are now routinely demonstrated and where careful pulse-energy
budgeting is essential to suppress decoherence; in superconducting
qubit pulse design \cite{MachnesRobustControl2018,KhanejaGRAPE2005},
where bandwidth and area constraints reflect microwave-line
limitations; in laser control of molecular dynamics
\cite{LapertSugny2009,SugnyConstraints2014,GuoDongShu2018}, where
zero-area pulses are mandatory; and more generally in any optimal
control problem in which equality constraints become approximately
linearly dependent on the constraint manifold. The framework
developed here applies verbatim to all such problems, since the
conditioning analysis is purely linear-algebraic and does not depend
on the particular form of the bilinear control system
\eqref{eq:tdse-intro}.

\subsection{Related work and novelty}

The unregularised flow \eqref{eq:flow-intro} originates in the
constrained quantum-control literature
\cite{ShuHoRabitz2016,ShuHoXingRabitz2016} and has been applied in
\cite{GuoDongShu2018,GuoLuoMaShu2019}; it is mathematically a
constrained instance of D-MORPH
\cite{Rothman2005a,Rothman2005b,MooreRabitz2011}, a homotopy framework
for level-set exploration on quantum control landscapes. Constrained
gradient methods for quantum optimal control under pointwise bounds
have been developed independently in
\cite{GPMQuantum2009,MorzhinPechen2025}; in those works the
constraints are box-type rather than integral, and the projection is
performed onto a convex set rather than onto a linear subspace, so
the moving Gram matrix does not appear. On the analytical side,
gradient-flow stability for explicit time-stepping schemes is
classical \cite{StuartHumphries1996,HairerLubichWanner2006}; classical
references for Tikhonov regularisation in inverse problems are
\cite{TikhonovArsenin1977,EnglHankeNeubauer1996,HansenRankDeficient}.
The present paper combines these strands by analysing Tikhonov
regularisation of the moving Gram matrix that drives a constrained
projection flow. To the best of our knowledge, this combination has
not been treated previously in the literature.

\subsection{Outline}

The rest of the paper is structured as follows.
Section~\ref{sec:setup} fixes the mathematical setting, the notation,
and proves well-posedness of the regularised flow.
Section~\ref{sec:cond} derives the conditioning bounds, including the
exact spectral identity that controls all later analysis.
Section~\ref{sec:cont} establishes the three structural theorems
governing the continuous flow: monotonicity, constraint drift, and
$L^{2}$-convergence to the unregularised limit.
Section~\ref{sec:disc} treats the forward-Euler discretisation and
proves the discrete CFL-type criterion.
Section~\ref{sec:numerics} reports numerical experiments on the
three-level Bell-state benchmark.
Section~\ref{sec:discussion} provides a physical interpretation of
the results and situates them within the broader landscape of
constrained quantum control.
Section~\ref{sec:conclusion} concludes and lists open problems. Two
appendices collect auxiliary linear-algebra lemmas used in the body
of the paper.

\section{Setup, notation, and well-posedness}\label{sec:setup}

We collect the precise mathematical setting in which the analysis is
carried out. Throughout, $T>0$ is fixed and
$\Hil:=L^{2}(0,T;\R)$ denotes the real Hilbert space of admissible
controls with inner product $\inner{u}{v}=\int_{0}^{T}u(t)v(t)\dd t$
and norm $\norm{u}=\inner{u}{u}^{1/2}$. We write $\norm{\cdot}_{2}$
both for the Euclidean norm on $\R^{M+1}$ and for the spectral norm on
$\R^{(M+1)\times(M+1)}$, and $\norm{\cdot}_{\infty}$ for the
$L^{\infty}$-norm of a function on $[0,T]$. The set of unit vectors
of $\C^{d}$ is denoted $\mathbb{S}_{\C^{d}}$, and $\mathrm{U}(d)$ is
the Lie group of $d\times d$ unitary matrices.

\subsection{The bilinear control problem}

Let $d\ge 2$, and let $H_{0},\mu\in\C^{d\times d}$ be Hermitian. For
each $E\in\Hil$ the propagator $U_{E}\colon[0,T]\to\mathrm{U}(d)$ is
the unique strong solution of
\begin{equation}\label{eq:tdse}
  \mathrm i\,\dot{U}_{E}(t)
  =\bigl(H_{0}-\mu\,E(t)\bigr)U_{E}(t),
  \qquad U_{E}(0)=I_{d},
\end{equation}
which exists and depends continuously on $E$ in operator norm
\cite{DAlessandro2008}. We fix unit vectors
$\psi_{0},\psi_{f}\in\mathbb{S}_{\C^{d}}$ and define the
terminal-fidelity functional
\begin{equation}\label{eq:J-def}
  J[E]
  =\bigl|\inner{\psi_{f}}{U_{E}(T)\psi_{0}}\bigr|^{2}\in[0,1].
\end{equation}
The functional $J\colon\Hil\to[0,1]$ is Fr\'echet differentiable
\cite{PeirceDahlehRabitz1988}; with $\psi(t)=U_{E}(t)\psi_{0}$ and
\begin{equation}\label{eq:lambda-def}
  \lambda(t)=U_{E}(t)\,U_{E}(T)^{*}\,
             |\psi_{f}\rangle\langle\psi_{f}|\,U_{E}(T)\,\psi_{0},
\end{equation}
the gradient is the bounded multiplication operator
\begin{equation}\label{eq:c0-def}
  c_{0}(E;t):=\frac{\delta J}{\delta E(t)}
  =-2\,\im\inner{\lambda(t)}{\mu\,\psi(t)}_{\C^{d}}.
\end{equation}
The map $E\mapsto c_{0}(E;\cdot)$ is locally Lipschitz from $\Hil$
into $L^{\infty}(0,T;\R)$ on bounded subsets of $\Hil$
\cite{PeirceDahlehRabitz1988,DAlessandro2008}.

\subsection{Equality constraints}

Let $h_{1},\dots,h_{M}\colon\Hil\to\R$ be continuously Fr\'echet
differentiable functionals of the integral form
\begin{equation}\label{eq:hcon}
  h_{m}[E]=\int_{0}^{T}\!\phi_{m}\bigl(E(t),t\bigr)\dd t = C_{m},
  \qquad m=1,\dots,M,
\end{equation}
with integrands $\phi_{m}$ jointly continuous in $(E,t)$ and $C^{1}$
in $E$. Write $c_{m}(E;t):=\delta h_{m}/\delta E(t)\in\Hil$ for the
corresponding gradients and group them into the column vector
\begin{equation}\label{eq:cvec}
  \bdf(E;t):=\bigl(c_{0}(E;t),c_{1}(E;t),\dots,c_{M}(E;t)\bigr)^{\!\top}
  \in\R^{M+1}.
\end{equation}
The constraint manifold $\Mc$ is given by~\eqref{eq:Mc-intro}. We fix
once and for all an envelope $S\in C^{2}([0,T];\R_{\ge 0})$ with
$S(0)=S(T)=0$ and $\norm{S}_{\infty}=1$. The role of $S$ is to
localise all updates to the open interval $(0,T)$, so that the
boundary conditions $E(0)=E(T)=0$ are preserved by the flow.

\subsection{The Gram matrix}

For $E\in\Hil$, the assembly map
$C_{S}(E)\colon\R^{M+1}\to\Hil$ is defined by
\begin{equation}\label{eq:CS-def}
  \bigl(C_{S}(E)\bm{\alpha}\bigr)(t)
  =S(t)^{1/2}\,\bdf(E;t)^{\top}\bm{\alpha},
  \qquad \bm{\alpha}\in\R^{M+1}.
\end{equation}
The associated symmetric positive semidefinite Gram matrix is
\begin{equation}\label{eq:Gamma-def}
  \Gamma(E)_{\ell\ell'}
  =\int_{0}^{T}\!S(t)\,c_{\ell}(E;t)\,c_{\ell'}(E;t)\dd t
  =\bigl[C_{S}(E)^{*}C_{S}(E)\bigr]_{\ell\ell'}.
\end{equation}
The spectrum of $\Gamma(E)$ consists of the squared singular values
of $C_{S}(E)$, and the rank deficiency of $\Gamma$ measures the
linear dependence of $\bdf$ in the $S$-weighted $L^{2}$-inner
product.

\begin{assumption}\label{ass:reg}
The map $E\mapsto\bdf(E;\cdot)$ is Lipschitz from bounded subsets of
$\Hil$ into $L^{\infty}(0,T;\R^{M+1})$. In particular, the entries of
$\Gamma(E)$ are Lipschitz functionals of $E$.
\end{assumption}

\begin{remark}\label{rem:assumption-holds}
Assumption~\ref{ass:reg} holds in the canonical setting in which (a)
$c_{0}$ is the adjoint-state gradient~\eqref{eq:c0-def}, smooth as a
composition of the smooth dependence of $U_{E}$ on $E$
\cite{DAlessandro2008} with a smooth bilinear form, and (b) each
constraint integrand $\phi_{m}$ is at most quadratic in $E$ with
$L^{\infty}$-bounded $t$-coefficients. The three integral constraints
introduced in Section~\ref{sec:intro} all fall in this class.
\end{remark}

\subsection{The unregularised and regularised flows}

The projection flow analysed in this paper is, with the conventions
\eqref{eq:c0-def}--\eqref{eq:Gamma-def},
\begin{equation}\label{eq:flow0}
  \partial_{s}E(s,t)
  =S(t)\,g_{0}(s)\sum_{\ell=0}^{M}
  \bigl[\Gamma(E(s,\cdot))^{-1}\bigr]_{0\ell}\,
  c_{\ell}(E(s,\cdot);t),
  \qquad E(0,\cdot)=E_{0},
\end{equation}
with $g_{0}(s)=\Gamma(E(s,\cdot))_{00}=\int_{0}^{T}S\,c_{0}^{2}\dd t$.
The Tikhonov-regularised flow obtained by replacing $\Gamma$ with
$\Geps$ in~\eqref{eq:flow0} is
\begin{equation}\label{eq:floweps}
  \partial_{s}E_{\eps}(s,t)
  =S(t)\,g_{0}^{(\eps)}(s)\sum_{\ell=0}^{M}
  \bigl[\Geps(E_{\eps}(s,\cdot))^{-1}\bigr]_{0\ell}\,
  c_{\ell}(E_{\eps}(s,\cdot);t),
  \qquad E_{\eps}(0,\cdot)=E_{0},
\end{equation}
with $g_{0}^{(\eps)}(s):=\int_{0}^{T}\!S\,c_{0}(E_{\eps}(s,\cdot);t)^{2}\dd t$,
which equals the $(0,0)$ entry of $\Gamma(E_{\eps}(s,\cdot))$. The
case $\eps=0$ recovers~\eqref{eq:flow0}; for $\eps>0$,
$\Geps\succeq\eps^{2}I$, so $\Geps^{-1}$ is bounded by $\eps^{-2}I$ on
$\R^{M+1}$ and the right-hand side of~\eqref{eq:floweps} is well
defined for every $E_{\eps}\in\Hil$.

\begin{proposition}[Well-posedness of the regularised flow]\label{prop:wp}
Let $\eps>0$ and $E_{0}\in\Hil$. Under Assumption~\ref{ass:reg}, the
initial-value problem~\eqref{eq:floweps} admits a unique global
$C^{1}$ solution $E_{\eps}\in C^{1}([0,\infty);\Hil)$.
\end{proposition}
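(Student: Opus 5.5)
We view \eqref{eq:floweps} as an autonomous ODE $\partial_s E_\eps = F_\eps(E_\eps)$ in the Hilbert space $\Hil$. Here
\[
F_\eps(E)(t):=S(t)\,\Gamma(E)_{00}\,\bigl[\Geps(E)^{-1}\bigr]_{0\cdot}\,\bdf(E;t),
\]
and $\Geps(E)=\Gamma(E)+\eps^2 I$. We will use the Picard--Lindel\"of theorem in the Banach space $\Hil$ for local existence and uniqueness, and then an a priori bound to exclude finite-time blow-up.

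\textbf{Step 1: local Lipschitz continuity.} Fix a ball $B_R\subset\Hil$. By Assumption~\ref{ass:reg}, $E\mapsto\bdf(E;\cdot)$ is Lipschitz and hence bounded from $B_R$ into $L^\infty(0,T;\R^{M+1})$; let $K_R$ be a bound and $L_R$ a Lipschitz constant. The entries of $\Gamma(E)$ are then bounded by $TK_R^2$, since $\norm{S}_\infty=1$. They are also Lipschitz, because
\[
|\Gamma(E)_{\ell\ell'}-\Gamma(E')_{\ell\ell'}|\le 2TK_RL_R\norm{E-E'}.
\]
Since $\Geps\succeq\eps^2I$, we have $\norm{\Geps^{-1}}_2\le\eps^{-2}$. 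The resolvent identity
\[
A^{-1}-B^{-1}=A^{-1}(B-A)B^{-1}
\]
then shows that $E\mapsto\Geps(E)^{-1}$ is Lipschitz on $B_R$ with constant $\le\eps^{-4}\,\mathrm{Lip}(\Gamma)$. The map $F_\eps$ is a finite product of bounded Lipschitz factors: the scalar $\Gamma_{00}$, the row $[\Geps^{-1}]_{0\cdot}$, and $S\bdf\in L^\infty\subset L^2$. It is therefore Lipschitz from $B_R$ into $\Hil$. Picard--Lindel\"of gives a unique local solution in $C^1([0,s^*);\Hil)$, where $C^1$ holds because $F_\eps$ is continuous. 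The solution extends uniquely to a maximal interval $[0,s_{\max})$, and if $s_{\max}<\infty$ then $\norm{E_\eps(s)}\to\infty$.

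\textbf{Step 2: global existence, the main obstacle.} A naive bound
\[
\norm{F_\eps(E)}\le \Gamma_{00}\,\eps^{-2}\sqrt{M+1}\,\sqrt{T}\,\max_\ell\norm{c_\ell}_\infty
\]
grows with $E$, because the constraint gradients can grow (for example, fluence gives $c_m=2E$) and $\Gamma_{00}$ is tied to $c_0$. We want linear growth of $F_\eps$. First, $c_0$ is globally bounded: from \eqref{eq:c0-def} and unitarity, $|c_0|\le 2\norm{\mu}$, so
\[
g_0^{(\eps)}\le 4T\norm{\mu}^2 .
\]
Second, the regularisation tames the constraint directions. Write $\beta:=\Geps^{-1}e_0$ and $v:=C_S\beta$. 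Then
\[
\norm{v}^2=\beta^\top\Gamma\beta\le\beta^\top\Geps\beta=[\Geps^{-1}]_{00}\le\eps^{-2}.
\]
Because $F_\eps=g_0^{(\eps)}S^{1/2}v$, this yields
\[
\norm{F_\eps(E)}\le 4T\norm{\mu}^2\,\eps^{-1}
\]
uniformly in $E$. This is the key estimate, since it holds with no growth assumption on the $c_m$. It follows that $\norm{E_\eps(s)}\le\norm{E_0}+4T\norm{\mu}^2\eps^{-1}s$, which stays bounded on bounded $s$-intervals. So $s_{\max}=\infty$, and uniqueness on every compact interval gives uniqueness of the global solution $E_\eps\in C^1([0,\infty);\Hil)$.

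If one does not want to rely on the specific bound for $c_0$, the fallback is to assume instead linear growth of $\norm{\bdf}_\infty$ in $\norm{E}$ and close the argument with Gr\"onwall. I expect the uniform bound via $\beta^\top\Gamma\beta\le[\Geps^{-1}]_{00}$ to be the cleanest route, and it is the step requiring care.
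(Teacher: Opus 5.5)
Your proof is correct. Its local part is the same as the paper's: Lipschitz continuity of $\mathcal F_{\eps}$ on bounded sets, using $\norm{\Geps^{-1}}_2\le\eps^{-2}$ and the resolvent identity, followed by Cauchy--Lipschitz.

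The global-existence step is where you genuinely differ from the paper, and your route is the stronger one.

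\textbf{The paper's route.} The paper bounds $\norm{\mathcal F_{\eps}(E)}\le\eps^{-2}g_0^{(\eps)}(E)\norm{S}_\infty\sum_\ell\norm{c_\ell}_\infty$. It then appeals to ``at-most-polynomial growth of $g_0^{(\eps)}$'' to rule out blow-up. As written this does not suffice, for two reasons. First, a vector field with superlinear polynomial growth can blow up in finite time, as $\dot x=x^2$ shows. Second, Assumption~\ref{ass:reg} gives no growth control on $\norm{c_\ell}_\infty$ beyond boundedness on bounded sets.

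\textbf{Your route.} You prove a uniform bound $\norm{\mathcal F_{\eps}(E)}\le 4T\norm{\mu}^2\eps^{-1}$ by combining two facts. One is the global pointwise bound $|c_0|\le 2\norm{\mu}$, which follows from unitarity and $\norm{\lambda(t)}\le 1$. The other is $\norm{C_S\beta}^2=\beta^\top\Gamma\beta\le[\Geps^{-1}]_{00}\le\eps^{-2}$. This second fact is exactly the inequality $\Geps^{-1}\Gamma\Geps^{-1}\preceq\Geps^{-1}$, which the paper uses only later, in the proof of Theorem~\ref{thm:euler}.

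\textbf{What each approach buys.} Your bound gives an a priori estimate linear in $s$ with no growth hypothesis on the constraint gradients, so it closes the gap in the paper's argument. Moreover, the global estimate never uses $L^\infty$ bounds on the $c_m$. It therefore still applies to constraints such as the fluence, whose gradient $2E$ is in general only in $L^2$. The price is that you use the specific adjoint structure of $c_0$ rather than Assumption~\ref{ass:reg} alone. The paper's proof also steps outside the assumption, though. Your fallback (linear growth of $\norm{\bdf}_\infty$ plus Gr\"onwall) is the minimal repair of the paper's own route.
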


\begin{proof}
Let $\mathcal F_{\eps}\colon\Hil\to\Hil$ denote the right-hand side
of~\eqref{eq:floweps} viewed as a function of $E$. Each entry of
$\Gamma(E)$ is Lipschitz in $E$ (Assumption~\ref{ass:reg}), and the
inversion $M\mapsto M^{-1}$ is smooth on the open set
$\{M:M\succeq\eps^{2}I\}\supset\{M=\Geps(E):E\in\Hil\}$. Composition
with the Lipschitz bounded multiplier $S(t)\,c_{\ell}(E;t)$ shows
that $\mathcal F_{\eps}$ is Lipschitz on bounded subsets of $\Hil$.
The classical Cauchy--Lipschitz theorem in Banach spaces
\cite[Theorem 3.1]{HinzePinnauUlbrichUlbrich2009} then yields local
existence of a unique solution. The bound
$\norm{\mathcal F_{\eps}(E)}\le \eps^{-2}g_{0}^{(\eps)}(E)\norm{S}_{\infty}
\sum_{\ell}\norm{c_{\ell}}_{\infty}$, together with at-most-polynomial
growth of $g_{0}^{(\eps)}$ in $\norm{E}_{\Hil}$, rules out finite-time
blow-up and yields global existence.
\end{proof}

\section{Conditioning bounds}\label{sec:cond}

We open the analysis with the spectral identity that controls all
later estimates and stability arguments.

\begin{theorem}[Conditioning of the regularised Gram matrix]
\label{thm:cond}
Let $E\in\Hil$ and let
$0\le\sigma_{M}(E)\le\cdots\le\sigma_{0}(E)$ be the singular values of
$C_{S}(E)$. For every $\eps\ge 0$,
\begin{equation}\label{eq:Gamma-spec}
  \spec\bigl(\Geps(E)\bigr)
  =\bigl\{\sigma_{k}(E)^{2}+\eps^{2}\bigr\}_{k=0}^{M},
\end{equation}
and consequently
\begin{equation}\label{eq:cond-formula}
  \cond\bigl(\Geps(E)\bigr)
  =\frac{\smax(E)^{2}+\eps^{2}}{\smin(E)^{2}+\eps^{2}},
  \qquad
  \norm{\Geps(E)^{-1}}_{2}=\frac{1}{\smin(E)^{2}+\eps^{2}}.
\end{equation}
In particular, if $\smin(E)>0$ then
$\cond(\Geps(E))\to\cond(\Gamma(E))$ as $\eps\to 0^{+}$; and if
$\smin(E)=0$ (degenerate Gram matrix) then
$\cond(\Geps(E))=1+\smax(E)^{2}/\eps^{2}<\infty$ for every $\eps>0$.
\end{theorem}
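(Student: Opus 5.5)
The plan is to reduce everything to the spectral theorem for the real symmetric matrix $\Gamma(E)=C_{S}(E)^{*}C_{S}(E)$ from~\eqref{eq:Gamma-def}. First I would check that $C_{S}(E)$ is a bounded linear map $\R^{M+1}\to\Hil$. This uses $\norm{S}_{\infty}=1$ and $\bdf(E;\cdot)\in L^{\infty}$ from Assumption~\ref{ass:reg}. Since the domain has finite dimension, $C_{S}(E)$ has finite rank and admits a singular value decomposition $C_{S}(E)=\sum_{k=0}^{M}\sigma_{k}u_{k}\langle v_{k},\cdot\rangle$. Here $\{v_{k}\}$ is an orthonormal basis of $\R^{M+1}$ and $\{u_{k}\}$ is orthonormal in $\Hil$; for indices with $\sigma_{k}=0$ the $u_{k}$ may be chosen arbitrarily.

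Next I would compute $C_{S}^{*}C_{S}v_{k}=\sigma_{k}^{2}v_{k}$, which shows that $\Gamma(E)=V\diag(\sigma_{k}^{2})V^{\top}$ with $V$ orthogonal. Adding $\eps^{2}I=V(\eps^{2}I)V^{\top}$ then gives $\Geps(E)=V\diag(\sigma_{k}^{2}+\eps^{2})V^{\top}$, which is~\eqref{eq:Gamma-spec}, multiplicities included. An alternative route avoids the singular value decomposition altogether: diagonalise the symmetric positive semidefinite $\Gamma$ orthogonally, note that its eigenvalues are $\norm{C_{S}v}^{2}\ge 0$ at unit eigenvectors, and use that these are by definition the squared singular values.

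Because $\Geps$ is symmetric positive semidefinite, its spectral norm is its largest eigenvalue, $\smax^{2}+\eps^{2}$. When the smallest eigenvalue $\smin^{2}+\eps^{2}$ is positive, $\Geps^{-1}=V\diag\bigl((\sigma_{k}^{2}+\eps^{2})^{-1}\bigr)V^{\top}$, so $\norm{\Geps^{-1}}_{2}=1/(\smin^{2}+\eps^{2})$. Taking the product gives~\eqref{eq:cond-formula}.

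For the degenerate case $\eps=0$ with $\smin=0$, I would adopt the usual convention $\cond=\infty$ and $\norm{\Gamma^{-1}}=\infty$ and note it explicitly. This is the only subtle point I anticipate, since the formula is then read in the extended sense.

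The limits follow directly. If $\smin>0$, the ratio is continuous in $\eps$ at $0$ and tends to $\smax^{2}/\smin^{2}=\cond(\Gamma)$. If $\smin=0$ and $\eps>0$, substitution gives $(\smax^{2}+\eps^{2})/\eps^{2}=1+\smax^{2}/\eps^{2}$, which is finite.

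I expect no real obstacle here. The only care needed is in stating the conventions: singular values are ordered so that $\smax=\sigma_{0}$ and $\smin=\sigma_{M}$, and the degenerate $\eps=0$ case must be handled separately as described above.
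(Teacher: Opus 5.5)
Your proposal is correct and follows essentially the same route as the paper. Both arguments orthogonally diagonalise $\Gamma(E)$ with eigenvalues $\sigma_k(E)^2$, shift by $\eps^2 I$, and read off the norm, condition-number and limit statements from the extreme eigenvalues; your explicit extended-value convention for the case $\eps=0$, $\smin(E)=0$ is a small point of care that the paper leaves implicit.
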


\begin{proof}
$\Gamma(E)$ is symmetric and admits an orthogonal eigendecomposition
$\Gamma(E)=Q\,\diag(\lambda_{0},\dots,\lambda_{M})\,Q^{\top}$ with
$\lambda_{k}=\sigma_{k}(E)^{2}\ge 0$. Then
$\Geps(E)=Q\,\diag(\lambda_{k}+\eps^{2})\,Q^{\top}$, which
proves~\eqref{eq:Gamma-spec}. The spectral norm of the inverse equals
the reciprocal of the smallest eigenvalue, giving the second equality
in~\eqref{eq:cond-formula}; the condition number is the ratio of
extreme eigenvalues, giving the first. The two limit statements are
immediate.
\end{proof}

\begin{remark}[Three regimes of regularisation]\label{rem:cond-interp}
Theorem~\ref{thm:cond} dictates the qualitative behaviour of the
regularisation, which falls into three regimes. For $\eps\ll\smin$
the regularisation is essentially invisible and
$\cond(\Geps)\approx\cond(\Gamma)$. For $\eps\sim\smin$ the condition
number crosses over from $\cond(\Gamma)$ to a value of order
$\smax^{2}/\eps^{2}$; this is the practically useful regime, in which
$\Geps$ is well-conditioned but the projection structure is
preserved. For $\eps\gg\smax$, $\cond(\Geps)\to 1$ and $\Geps$ is
essentially a scalar multiple of the identity, so the flow degenerates
to ordinary unprojected gradient ascent. The numerical experiments of
Section~\ref{sec:numerics} illustrate all three regimes.
\end{remark}

\begin{corollary}[Stability margin of explicit integrators]
\label{cor:cflpreview}
If $J$ admits a uniform bound $G$ on its second variation along the
trajectory and $\bdf$ remains in a bounded set, then any
forward-Euler integrator of~\eqref{eq:floweps} preserves objective
monotonicity up to local truncation error
$\mathcal{O}(\Delta s^{2})$ provided
\begin{equation}\label{eq:CFL-preview}
  \Delta s\,G\,\norm{\Geps^{-1}}_{2}\le\alpha,
  \qquad 0<\alpha<2.
\end{equation}
The detailed statement and proof are given in
Theorem~\ref{thm:euler}.
\end{corollary}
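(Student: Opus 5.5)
The plan is a second-order Taylor expansion of $J$ along one forward-Euler step $E_{n+1}=E_{n}+\Delta s\,F_{n}$, where $F_{n}:=\mathcal F_{\eps}(E_{n})$ is the right-hand side of~\eqref{eq:floweps}. Using the Fr\'echet differentiability of $J$ and the assumed bound $G$ on its second variation, I will write
\[
J[E_{n+1}]-J[E_{n}]
\ge \Delta s\,\inner{c_{0}}{F_{n}}-\tfrac{1}{2}G\,\Delta s^{2}\norm{F_{n}}^{2}.
\]
The proof then reduces to comparing the linear gain $\inner{c_{0}}{F_{n}}$ with $\norm{F_{n}}^{2}$, with the comparison constant expressed through $\norm{\Geps^{-1}}_{2}$.

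\textbf{Step 1 (algebra of the first-order term).} Set $\bm{\beta}:=g_{0}^{(\eps)}\,\Geps^{-1}e_{0}$. Then $F_{n}=S^{1/2}\,C_{S}\bm{\beta}$, since $S\,\bdf^{\top}\bm{\beta}=S^{1/2}(C_{S}\bm{\beta})$. Consequently
\[
\inner{c_{0}}{F_{n}}=e_{0}^{\top}C_{S}^{*}C_{S}\bm{\beta}=e_{0}^{\top}\Gamma\bm{\beta}.
\]
Substituting $\Gamma=\Geps-\eps^{2}I$ gives
\[
\inner{c_{0}}{F_{n}}=g_{0}^{(\eps)}\bigl(1-\eps^{2}[\Geps^{-1}]_{00}\bigr),
\]
which is nonnegative because $\Geps\succeq\eps^{2}I$. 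This is the computation behind Theorem~\ref{thm:mono}, and I take it as given.

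\textbf{Step 2 (second-order term).} Since $\norm{S}_{\infty}=1$,
\[
\norm{F_{n}}^{2}\le\norm{C_{S}\bm{\beta}}^{2}=\bm{\beta}^{\top}\Gamma\bm{\beta}\le\bm{\beta}^{\top}\Geps\bm{\beta}
=(g_{0}^{(\eps)})^{2}[\Geps^{-1}]_{00}.
\]
Write $x:=[\Geps^{-1}]_{00}$. By Theorem~\ref{thm:cond}, $x\le\norm{\Geps^{-1}}_{2}=1/(\smin^{2}+\eps^{2})$. To compare the two terms I would like $\norm{F_{n}}^{2}\lesssim\norm{\Geps^{-1}}_{2}\,\inner{c_{0}}{F_{n}}$, with the factor $g_{0}^{(\eps)}$ absorbed into the constant $G$ (the statement allows $\bdf$ to stay bounded). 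For this I will refine Step 2 by writing
\[
\bm{\beta}^{\top}\Gamma\bm{\beta}=(g_{0}^{(\eps)})^{2}\bigl(e_{0}^{\top}\Geps^{-1}e_{0}-\eps^{2}\,e_{0}^{\top}\Geps^{-2}e_{0}\bigr).
\]
Diagonalising $\Geps^{-1}$, the right-hand side is at most $g_{0}^{(\eps)}\norm{\Geps^{-1}}_{2}$ times $g_{0}^{(\eps)}e_{0}^{\top}\Gamma\Geps^{-1}e_{0}$. The key point is that each eigencomponent has weight $\lambda_{k}/(\lambda_{k}+\eps^{2})^{2}\le\norm{\Geps^{-1}}_{2}\,\lambda_{k}/(\lambda_{k}+\eps^{2})$. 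This gives
\[
\norm{F_{n}}^{2}\le g_{0}^{(\eps)}\norm{\Geps^{-1}}_{2}\,\inner{c_{0}}{F_{n}}.
\]

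\textbf{Step 3 (combine).} Inserting Step 2 into the Taylor bound yields
\[
J[E_{n+1}]-J[E_{n}]\ge\Delta s\,\inner{c_{0}}{F_{n}}\Bigl(1-\tfrac{1}{2}\Delta s\,G'\norm{\Geps^{-1}}_{2}\Bigr),
\]
where $G':=G\,g_{0}^{(\eps)}$ is bounded on the trajectory, and I rename it $G$ as in the statement. The factor in parentheses is at least $1-\alpha/2>0$ under~\eqref{eq:CFL-preview}, which gives monotonicity. The "up to $\mathcal{O}(\Delta s^{2})$" qualifier accounts for the gap between this exact second-order bound and the continuous flow, together with the remainder if $G$ bounds the second variation only along the trajectory rather than on the whole segment $[E_{n},E_{n+1}]$.

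\textbf{Main obstacle.} The hard part is the spectral inequality in Step 2: obtaining $\norm{F_{n}}^{2}$ controlled by the first-order gain, not merely by $(g_{0}^{(\eps)})^{2}\norm{\Geps^{-1}}_{2}$. Without it, the bound degenerates when $\inner{c_{0}}{F_{n}}$ is small compared with $g_{0}^{(\eps)}$. I would handle it first by the eigenbasis computation above. If that fails, the fallback is to state the result with $g_{0}^{(\eps)}$ folded into $G$ and to accept monotonicity only up to $\mathcal{O}(\Delta s^{2})$.
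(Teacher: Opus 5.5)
Your proof is correct. It follows the same skeleton as the paper's proof of Theorem~\ref{thm:euler}: a Taylor expansion along the Euler step, the first-order term taken from the computation behind~\eqref{eq:dJ-id}, and control of the step norm through $\Gamma\preceq\Geps$. Your Step~2, however, uses a sharper key estimate, and that difference matters.

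\textbf{Where the paper's argument falls short.} The paper stops at $\norm{v_{\eps}^{k}}^{2}\le(g_{0}^{(\eps)})^{2}\norm{\Geps^{-1}}_{2}$. It then asserts that the quadratic term is at least $-\tfrac{\alpha}{2}\Delta s\,g_{0}^{(\eps)}$, and that this combines with the linear term into the product $(1-\tfrac{\alpha}{2})\,g_{0}^{(\eps)}(1-\eps^{2}[\Geps^{-1}]_{00})$. From that bound one actually gets only $-\tfrac{\alpha}{2}\Delta s\,(g_{0}^{(\eps)})^{2}$. Even after absorbing one factor of $g_{0}^{(\eps)}$ into $G$, the sum is only $\Delta s\,g_{0}^{(\eps)}\bigl(1-\eps^{2}[\Geps^{-1}]_{00}-\tfrac{\alpha}{2}\bigr)$. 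This is weaker than the claimed product, and it becomes a negative, hence useless, lower bound once $\eps^{2}[\Geps^{-1}]_{00}>1-\tfrac{\alpha}{2}$. That happens, for instance, when $\eps^{2}\gg g_{0}^{(\eps)}$, since $1-\eps^{2}[\Geps^{-1}]_{00}\le g_{0}^{(\eps)}/\eps^{2}$.

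\textbf{What your estimate adds.} Your eigencomponent inequality $\lambda_{k}/(\lambda_{k}+\eps^{2})^{2}\le\norm{\Geps^{-1}}_{2}\,\lambda_{k}/(\lambda_{k}+\eps^{2})$ is equivalent to $\norm{v_{\eps}^{k}}^{2}\le g_{0}^{(\eps)}\norm{\Geps^{-1}}_{2}\inner{c_{0}}{v_{\eps}^{k}}$. It makes the second-order loss proportional to the first-order gain, and it is exactly what yields~\eqref{eq:disc-mono} in product form.

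\textbf{The CFL constant.} Your explicit constant $G\,g_{0}^{(\eps)}$ is the honest one. It may be absorbed into $G$, because bounded $\bdf$ bounds $g_{0}^{(\eps)}$. It cannot simply be dropped: for $M=0$, $\eps=0$ one has $\norm{\Gamma^{-1}}_{2}=1/g_{0}$. The uncorrected condition would then permit $\Delta s\,G$ up to $\alpha g_{0}$, whereas ascent along $S\,c_{0}$ generally requires $\Delta s\,G\lesssim 2$.

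\textbf{Regularity needed.} Read with $G$ bounding $\delta^{2}J$ on the segment $[E^{k},E^{k+1}]$, your Taylor inequality holds exactly by the integral form of the remainder. It needs no $R_{k}$ and only $C^{2}$ regularity. The paper instead expands at $E^{k}$ and needs a third-derivative bound to control $R_{k}$.
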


The structural content of~\eqref{eq:CFL-preview} is direct:
increasing $\eps$ enlarges
$\smin(\Geps)^{2}=\smin(\Gamma)^{2}+\eps^{2}$, hence shrinks
$\norm{\Geps^{-1}}$, hence enlarges the admissible step size. We make
this rigorous in Section~\ref{sec:disc}.

\section{Continuous-time analysis}\label{sec:cont}

We now establish the three structural theorems that govern the
continuous regularised flow. Throughout the section we fix
$\eps\ge 0$ and let $E_{\eps}\colon[0,s^{*}]\to\Hil$ denote the
solution of~\eqref{eq:floweps} provided by Proposition~\ref{prop:wp}
(the case $\eps=0$ is treated as the limit of the $\eps>0$
statements; cf.~Theorem~\ref{thm:conv}).

\subsection{Monotonicity}

\begin{theorem}[Regularised monotonicity]\label{thm:mono}
Along the regularised flow~\eqref{eq:floweps}, the objective
satisfies
\begin{equation}\label{eq:dJ-id}
  \frac{\dd J[E_{\eps}(s,\cdot)]}{\dd s}
  =g_{0}^{(\eps)}(s)\Bigl(1-\eps^{2}\bigl[\Geps(s)^{-1}\bigr]_{00}\Bigr)
  \;\ge\;0,
  \qquad s\in[0,s^{*}].
\end{equation}
Equality holds if and only if $g_{0}^{(\eps)}(s)=0$, equivalently
$c_{0}(E_{\eps}(s,\cdot);t)=0$ for almost every $t\in\{S>0\}$.
\end{theorem}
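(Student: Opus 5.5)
The plan is to differentiate $J$ along the trajectory with the chain rule, collapse the resulting sum into an entry of the matrix product $\Geps^{-1}\Gamma$, and then read off the sign from the spectral calculus of Theorem~\ref{thm:cond}. By Proposition~\ref{prop:wp}, $s\mapsto E_{\eps}(s,\cdot)$ is $C^{1}$ into $\Hil$, and $J$ is Fr\'echet differentiable with gradient $c_{0}$. Hence
\[
\frac{\dd J[E_{\eps}(s,\cdot)]}{\dd s}=\inner{c_{0}(E_{\eps}(s,\cdot);\cdot)}{\partial_{s}E_{\eps}(s,\cdot)} .
\]

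\textbf{Step 1: the identity.} Insert \eqref{eq:floweps} into the inner product and pull the finite sum and the scalars $g_{0}^{(\eps)}(s)$, $[\Geps^{-1}]_{0\ell}$ out of the integral. The remaining integrals are $\int_{0}^{T}S\,c_{0}\,c_{\ell}\dd t=\Gamma_{\ell 0}$ by \eqref{eq:Gamma-def}. This gives
\[
\frac{\dd J}{\dd s}=g_{0}^{(\eps)}\sum_{\ell}[\Geps^{-1}]_{0\ell}\Gamma_{\ell0}=g_{0}^{(\eps)}\,[\Geps^{-1}\Gamma]_{00}.
\]
Writing $\Gamma=\Geps-\eps^{2}I$ yields $\Geps^{-1}\Gamma=I-\eps^{2}\Geps^{-1}$, and taking the $(0,0)$ entry gives exactly \eqref{eq:dJ-id}.

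\textbf{Step 2: the sign.} Take the eigendecomposition $\Gamma=Q\,\diag(\lambda_{k})Q^{\top}$ from the proof of Theorem~\ref{thm:cond}. Then $\Geps^{-1}\Gamma=Q\,\diag\bigl(\lambda_{k}/(\lambda_{k}+\eps^{2})\bigr)Q^{\top}$, so
\[
1-\eps^{2}[\Geps^{-1}]_{00}=\sum_{k}\frac{\lambda_{k}}{\lambda_{k}+\eps^{2}}\,Q_{0k}^{2}\ \ge 0 .
\]
Here each $\lambda_{k}\ge0$ and, for $\eps>0$, every denominator is positive. Alternatively, the bound $\norm{\Geps^{-1}}_{2}\le\eps^{-2}$ from \eqref{eq:cond-formula} already gives $\eps^{2}[\Geps^{-1}]_{00}\le1$. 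Since $g_{0}^{(\eps)}\ge0$ as an integral of $S c_{0}^{2}$, the product is nonnegative. For $\eps=0$ the bracket equals $1$ whenever $\Gamma$ is invertible, which is consistent with the convention that this case is a limit.

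\textbf{Step 3: the equality case.} This is the only point needing care, because the bracket could in principle vanish while $g_{0}^{(\eps)}>0$; I rule that out. The spectral sum in Step 2 vanishes iff $Q_{0k}=0$ for every $k$ with $\lambda_{k}>0$. That says $e_{0}$ lies in the span of the eigenvectors with $\lambda_{k}=0$, i.e.\ $e_{0}\in\ker\Gamma$. In that case $\Gamma_{00}=e_{0}^{\top}\Gamma e_{0}=0$, so $g_{0}^{(\eps)}=\Gamma_{00}=0$. Hence $\dd J/\dd s=0$ iff $g_{0}^{(\eps)}(s)=0$. Finally, $g_{0}^{(\eps)}=\int_{0}^{T}S\,c_{0}^{2}\dd t$ with $S\ge0$, so it vanishes iff $c_{0}=0$ almost everywhere on $\{S>0\}$.
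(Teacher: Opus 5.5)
Your proposal is correct and follows the paper's proof essentially step for step. Both use the same chain-rule collapse to $g_{0}^{(\eps)}[\Geps^{-1}\Gamma]_{00}$, the same rewriting $\Geps^{-1}\Gamma=I-\eps^{2}\Geps^{-1}$, and the same reduction of the equality case to $e_{0}\in\ker\Gamma$. The only differences are cosmetic. You read off both the sign and the equality case from the explicit spectral sum $\sum_{k}Q_{0k}^{2}\lambda_{k}/(\lambda_{k}+\eps^{2})$, where the paper uses the diagonal-entry bound $0\le[\Geps^{-1}]_{00}\le\eps^{-2}$. You also conclude $g_{0}^{(\eps)}=\Gamma_{00}=e_{0}^{\top}\Gamma e_{0}=0$ directly, instead of passing through $C_{S}e_{0}=S^{1/2}c_{0}=0$.
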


\begin{proof}
By the chain rule and the definition $c_{0}=\delta J/\delta E$,
\begin{align*}
  \frac{\dd J}{\dd s}
  &=\int_{0}^{T}\!c_{0}(s,t)\,\partial_{s}E_{\eps}(s,t)\dd t\\
  &=g_{0}^{(\eps)}(s)\sum_{\ell=0}^{M}
    \bigl[\Geps(s)^{-1}\bigr]_{0\ell}
    \int_{0}^{T}\!S(t)\,c_{0}(s,t)\,c_{\ell}(s,t)\dd t\\
  &=g_{0}^{(\eps)}(s)\sum_{\ell=0}^{M}
    \bigl[\Geps(s)^{-1}\bigr]_{0\ell}\,\Gamma(s)_{0\ell}
   =g_{0}^{(\eps)}(s)\,\bigl[\Geps(s)^{-1}\Gamma(s)\bigr]_{00}.
\end{align*}
Since $\Gamma=\Geps-\eps^{2}I$, we have
$\Geps^{-1}\Gamma=I-\eps^{2}\Geps^{-1}$, hence
$[\Geps^{-1}\Gamma]_{00}=1-\eps^{2}[\Geps^{-1}]_{00}$, which
is~\eqref{eq:dJ-id}. Non-negativity follows from the diagonal-entry
bound $0\le[\Geps^{-1}]_{00}\le\eps^{-2}$, valid because
$\Geps\succeq\eps^{2}I$. Equality in
$1-\eps^{2}[\Geps^{-1}]_{00}\ge 0$ requires
$[\Geps^{-1}]_{00}=\eps^{-2}$, which by the spectral theorem is
equivalent to $e_{0}$ lying in the $\eps^{2}$-eigenspace of $\Geps$,
i.e.\ in the kernel of $\Gamma$. The latter is equivalent to
$C_{S}(E_{\eps})\,e_{0}=S^{1/2}c_{0}=0$ in $L^{2}(0,T)$, hence
$c_{0}=0$ almost everywhere on $\{S>0\}$. The $g_{0}^{(\eps)}=0$
branch is the same condition expressed as $\int S\,c_{0}^{2}\dd t=0$.
\end{proof}

\begin{remark}[Rayleigh-quotient interpretation]\label{rem:rayleigh}
Identity~\eqref{eq:dJ-id} can be rewritten as
\begin{equation*}
  \frac{\dd J}{\dd s}=g_{0}^{(\eps)}(s)\cdot\rho_{\eps}(s),
  \qquad
  \rho_{\eps}(s):=\bigl[\Geps^{-1}\Gamma\bigr]_{00}\in[0,1].
\end{equation*}
The factor $\rho_{\eps}$ is a Rayleigh-type ratio measuring the
projection of $e_{0}$ onto the range of $\Gamma$ in the
$\Geps^{-1}$-norm. In the limit $\eps\to 0^{+}$ on the regular set
$\smin>0$, $\rho_{\eps}\to 1$ and we recover the unregularised
identity $\dd J/\dd s=g_{0}$. The regularisation therefore tempers
the rate of ascent without ever reversing its sign.
\end{remark}

\subsection{Constraint drift}

\begin{theorem}[Constraint drift identity]\label{thm:drift}
Let $E_{\eps}\colon[0,s^{*}]\to\Hil$ solve~\eqref{eq:floweps} with
$E_{0}\in\Mc$, that is, $h_{m}[E_{0}]=C_{m}$ for all $m=1,\dots,M$.
Then for each $m=1,\dots,M$,
\begin{equation}\label{eq:drift-exact}
  h_{m}\bigl[E_{\eps}(s,\cdot)\bigr]-C_{m}
  =-\eps^{2}\!\int_{0}^{s}g_{0}^{(\eps)}(\sigma)\,
     \bigl[\Geps(\sigma)^{-1}\bigr]_{m0}\dd\sigma,
  \qquad s\in[0,s^{*}].
\end{equation}
Consequently
\begin{equation}\label{eq:drift-bd}
  \bigl|h_{m}[E_{\eps}(s,\cdot)]-C_{m}\bigr|
  \le\eps^{2}\!\int_{0}^{s}\!g_{0}^{(\eps)}(\sigma)\,
                \norm{\Geps(\sigma)^{-1}}_{2}\dd\sigma,
\end{equation}
and the sharper bound
\begin{equation}\label{eq:drift-bd-sharp}
  \bigl|h_{m}[E_{\eps}(s,\cdot)]-C_{m}\bigr|
  \le\eps^{2}\,L_{m}^{(\eps)}(s)\!\int_{0}^{s}\!g_{0}^{(\eps)}(\sigma)\dd\sigma
\end{equation}
holds with
\begin{equation}\label{eq:Lm}
  L_{m}^{(\eps)}(s)
  :=\sup_{0\le\sigma\le s}
    \frac{\Gamma(\sigma)_{mm}^{1/2}\,\Gamma(\sigma)_{00}^{1/2}}
         {\sigma_{M}(\sigma)^{2}+\eps^{2}}.
\end{equation}
\end{theorem}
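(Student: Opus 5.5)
The plan is to differentiate each constraint functional along the regularised flow~\eqref{eq:floweps} using the chain rule, exactly as in the proof of Theorem~\ref{thm:mono}, and then integrate in $s$. For $m\ge 1$ we have $c_{m}=\delta h_{m}/\delta E$, and $E_{\eps}\in C^{1}([0,s^{*}];\Hil)$ by Proposition~\ref{prop:wp}. Substituting the right-hand side of~\eqref{eq:floweps} therefore gives
\begin{equation*}
  \frac{\dd h_{m}[E_{\eps}(s,\cdot)]}{\dd s}
  =g_{0}^{(\eps)}(s)\sum_{\ell=0}^{M}\bigl[\Geps(s)^{-1}\bigr]_{0\ell}\,\Gamma(s)_{\ell m}
  =g_{0}^{(\eps)}(s)\,\bigl[\Geps(s)^{-1}\Gamma(s)\bigr]_{0m}.
\end{equation*}

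Next we use the same algebraic identity $\Geps^{-1}\Gamma=I-\eps^{2}\Geps^{-1}$ as in Theorem~\ref{thm:mono}. Since $m\ge 1$, the Kronecker term $\delta_{0m}$ vanishes, leaving $-\eps^{2}[\Geps^{-1}]_{0m}$. By symmetry of $\Geps$ this equals $-\eps^{2}[\Geps^{-1}]_{m0}$. Integrating from $0$ to $s$ and using $h_{m}[E_{0}]=C_{m}$ yields~\eqref{eq:drift-exact}. Continuity of the integrand in $\sigma$ follows from Assumption~\ref{ass:reg} together with the smoothness of inversion on $\{\Geps\succeq\eps^{2}I\}$, so the integral is well defined.

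For the first bound~\eqref{eq:drift-bd}, estimate the entry by the operator norm: $|[\Geps^{-1}]_{m0}|=|e_{m}^{\top}\Geps^{-1}e_{0}|\le\norm{\Geps^{-1}}_{2}$. Then move the absolute value inside the integral, using $g_{0}^{(\eps)}\ge 0$. By Theorem~\ref{thm:cond}, $\norm{\Geps^{-1}}_{2}=(\sigma_{M}^{2}+\eps^{2})^{-1}$, which makes the prefactor computable.

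For the sharper bound~\eqref{eq:drift-bd-sharp}, the natural route is a Cauchy--Schwarz estimate that brings in the diagonal entries of $\Gamma$. Concretely, I would insert $\Gamma=C_{S}^{*}C_{S}$ and write the relevant entry as an inner product of the columns $S^{1/2}c_{m}$ and $S^{1/2}c_{0}$, whose norms are $\Gamma_{mm}^{1/2}$ and $\Gamma_{00}^{1/2}$. The factor $\norm{\Geps^{-1}}_{2}=(\sigma_{M}^{2}+\eps^{2})^{-1}$ would come from the single remaining resolvent. Once a pointwise bound $|[\Geps(\sigma)^{-1}]_{m0}|\le\Gamma_{mm}^{1/2}\Gamma_{00}^{1/2}/(\sigma_{M}^{2}+\eps^{2})$ is available, taking the supremum over $\sigma\in[0,s]$ pulls $L_{m}^{(\eps)}(s)$ out of the integral.

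This pointwise entry bound is where I expect the main difficulty. The obvious identities do not produce exactly the form in~\eqref{eq:Lm}:
\begin{itemize}[leftmargin=2em,itemsep=2pt]
\item Cauchy--Schwarz in the $\Geps^{-1}$-inner product gives $\sqrt{[\Geps^{-1}]_{mm}[\Geps^{-1}]_{00}}$, which involves entries of $\Geps^{-1}$ rather than of $\Gamma$.
\item Writing $[\Geps^{-1}]_{m0}=-\eps^{-2}[\Gamma\Geps^{-1}]_{m0}$ introduces an unwanted $\eps^{-2}$ factor.
\end{itemize}
So I would first check the bound on the diagonalising basis of Theorem~\ref{thm:cond}, and test it in the small-$\Gamma$ and large-$\Gamma$ scaling limits. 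If it fails in some regime, I would state the sharper estimate under an additional normalisation hypothesis on the constraint gradients, or with $\Gamma$ entries replaced by the corresponding $\Geps$ entries.
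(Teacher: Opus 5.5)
Your proofs of \eqref{eq:drift-exact} and \eqref{eq:drift-bd} follow the paper's argument exactly and are correct: chain rule, $\Geps^{-1}\Gamma=I-\eps^{2}\Geps^{-1}$, the vanishing $(m,0)$ entry of $I$, and bounding an entry by the spectral norm. The step you leave open is the pointwise bound $|[\Geps^{-1}]_{m0}|\le\Gamma_{mm}^{1/2}\Gamma_{00}^{1/2}/(\sigma_{M}^{2}+\eps^{2})$. No argument can supply it, because it is false, and the small-$\Gamma$ test you propose would expose this. Under $\Gamma\mapsto t\Gamma$, $\eps^{2}\mapsto t\eps^{2}$, the left side scales like $t^{-1}$ while the right side is invariant.

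Concretely, take $M=1$ and $\Gamma=\delta\bigl(\begin{smallmatrix}1&1\\1&1\end{smallmatrix}\bigr)$, so $S^{1/2}c_{0}$ and $S^{1/2}c_{1}$ are collinear and $\sigma_{1}=0$. Then $[\Geps^{-1}]_{10}=-\delta/\bigl(\eps^{2}(2\delta+\eps^{2})\bigr)$, which exceeds the claimed bound $\delta/\eps^{2}$ whenever $2\delta+\eps^{2}<1$. This Gram matrix arises at $s=0$ from the affine constraint $h_{1}[E]=\int_{0}^{T}c_{0}(E_{0};t)E(t)\dd t$, with $\delta=g_{0}^{(\eps)}(0)$. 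Both sides of \eqref{eq:drift-bd-sharp} equal $\eps^{2}s\,g_{0}^{(\eps)}(0)$ times the corresponding pointwise quantity at $\sigma=0$, up to $o(s)$. Hence \eqref{eq:drift-bd-sharp} itself fails for small $s$.

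The paper's proof follows your first bullet (Cauchy--Schwarz in the $\Geps^{-1}$-inner product). It then asserts, citing Lemma~\ref{lem:diag-bound}, that $[\Geps^{-1}]_{kk}\le\Gamma_{kk}/(\smin^{2}+\eps^{2})^{2}$. That lemma only gives $[\Geps^{-1}]_{kk}\le 1/(\smin^{2}+\eps^{2})$. The asserted inequality is false: if $\Gamma_{kk}=0$ then $[\Geps^{-1}]_{kk}=\eps^{-2}$. Even if it held, it would put $(\smin^{2}+\eps^{2})^{2}$, not the first power, in the denominator of $L_{m}^{(\eps)}$. So the gap in your proposal is a gap in the statement itself.

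What you can prove instead follows from your own column route, which is sound but bounds $[\Geps^{-1}\Gamma]_{m0}$ rather than $[\Geps^{-1}]_{m0}$. Since $\Gamma$ and $\Geps^{-1}$ commute, $[\Geps^{-1}\Gamma]_{m0}=\inner{\Gamma^{1/2}e_{m}}{\Geps^{-1}\Gamma^{1/2}e_{0}}$, equivalently $\inner{S^{1/2}c_{m}}{(C_{S}C_{S}^{*}+\eps^{2})^{-1}S^{1/2}c_{0}}_{\Hil}$. This gives $|[\Geps^{-1}\Gamma]_{m0}|\le\Gamma_{mm}^{1/2}\Gamma_{00}^{1/2}/(\sigma_{M}^{2}+\eps^{2})$. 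Integrating $\dd h_{m}/\dd s=g_{0}^{(\eps)}[\Geps^{-1}\Gamma]_{m0}$ then yields \eqref{eq:drift-bd-sharp} \emph{without} the factor $\eps^{2}$; this is exactly the $\eps^{-2}$ of your second bullet cancelling the prefactor.

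For a genuinely $\mathcal{O}(\eps^{2})$ bound in terms of $\Gamma$-entries, shift by $\mu:=\sigma_{M}^{2}+\eps^{2}$ rather than by $\eps^{2}$. For $m\ge 1$, $[\Geps^{-1}]_{m0}=-\mu^{-1}[\Geps^{-1}P]_{m0}$ with $P:=\Gamma-\sigma_{M}^{2}I\succeq 0$, which commutes with $\Geps^{-1}$. The same Cauchy--Schwarz step gives $|[\Geps^{-1}]_{m0}|\le(\Gamma_{mm}-\sigma_{M}^{2})^{1/2}(\Gamma_{00}-\sigma_{M}^{2})^{1/2}/(\sigma_{M}^{2}+\eps^{2})^{2}$. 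So \eqref{eq:drift-bd-sharp} holds once the denominator in \eqref{eq:Lm} is squared. These two corrected statements are what your fallback plan should target.
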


\begin{proof}
Differentiating $h_{m}[E_{\eps}(s,\cdot)]$ along the flow,
\begin{align*}
  \frac{\dd h_{m}}{\dd s}
  &=\int_{0}^{T}\!c_{m}(s,t)\,\partial_{s}E_{\eps}(s,t)\dd t\\
  &=g_{0}^{(\eps)}(s)\sum_{\ell=0}^{M}
    \bigl[\Geps^{-1}\bigr]_{0\ell}\,\Gamma_{m\ell}(s)
   =g_{0}^{(\eps)}(s)\,\bigl[\Geps^{-1}\Gamma\bigr]_{m0}\\
  &=g_{0}^{(\eps)}(s)\,\bigl[I-\eps^{2}\Geps^{-1}\bigr]_{m0}
   =-\eps^{2}\,g_{0}^{(\eps)}(s)\,[\Geps^{-1}]_{m0},
\end{align*}
because the $(m,0)$ entry of the identity matrix vanishes for
$m\ge 1$. Integration from $0$ to $s$ with $h_{m}[E_{0}]=C_{m}$
yields~\eqref{eq:drift-exact}. The bound~\eqref{eq:drift-bd} follows
from $|[\Geps^{-1}]_{m0}|\le\norm{\Geps^{-1}}_{2}$. For the sharper
estimate~\eqref{eq:drift-bd-sharp}, write
$[\Geps^{-1}]_{m0}=e_{m}^{\top}\Geps^{-1}e_{0}
=\inner{\Geps^{-1/2}e_{m}}{\Geps^{-1/2}e_{0}}_{\R^{M+1}}$ and apply
Cauchy--Schwarz to obtain
$|[\Geps^{-1}]_{m0}|\le[\Geps^{-1}]_{mm}^{1/2}[\Geps^{-1}]_{00}^{1/2}$.
Substituting the diagonal-entry bound of Lemma~\ref{lem:diag-bound}
(Appendix~\ref{app:gram}) gives
$[\Geps^{-1}]_{kk}\le\Gamma_{kk}/(\smin^{2}+\eps^{2})^{2}$ in the
worst case, yielding~\eqref{eq:drift-bd-sharp}.
\end{proof}

\begin{remark}[Quadratic-in-$\eps$ drift is exact, not asymptotic]
\label{rem:exact-drift}
Identity~\eqref{eq:drift-exact} is exact: the regularisation-induced
constraint drift is genuinely $\mathcal{O}(\eps^{2})$ with a
computable prefactor. This is the analytical backbone of the
$\eps^{2}$-scaling reproduced numerically in Figure~\ref{fig:thm4}.
The right-hand side of~\eqref{eq:drift-exact} vanishes for $\eps=0$,
which is a different statement from the discrete drift observed in
any numerical implementation: as the experiments of
Section~\ref{sec:numerics} will show, for $\eps$ in the small regime
the total drift is dominated by the temporal discretisation error of
the forward integrator rather than by the Tikhonov term.
\end{remark}

\begin{remark}[Affine versus nonlinear constraints]\label{rem:affine}
For an affine constraint $h_{m}$, the gradient $c_{m}$ depends only
on $t$, not on $E$. Examples are $h_{m}[E]=\int E\dd t$ and
$h_{m}[E]=\int E(t)\,\eta(t)\dd t$ for a fixed $\eta$. In this case
the discretisation preserves $h_{m}$ to machine precision regardless
of $\eps$, because the $s$-derivative~\eqref{eq:drift-exact} is the
only source of variation. By contrast, when $c_{m}$ depends on $E$
--- as for the fluence $h_{m}[E]=\int E^{2}\dd t$ with $c_{m}=2E$ ---
even the unregularised discretisation incurs an
$\mathcal{O}(\Delta s^{2})$ drift per step from the second variation
of $h_{m}$, which dominates the budget at small $\eps$ and is reduced
by enlarging $\eps$. This dichotomy is clearly visible in the
experiments below.
\end{remark}

\subsection{Convergence of the regularised to the unregularised flow}

\begin{theorem}[$L^{2}$-convergence at rate $\mathcal{O}(\eps^{2})$]
\label{thm:conv}
Let $s^{*}>0$ and assume that the unregularised flow~\eqref{eq:flow0}
admits a $C^{1}$ solution $E_{0}\colon[0,s^{*}]\to\Hil$ along which
$\Gamma(E_{0}(s,\cdot))\succeq\smin^{2}I_{M+1}$ uniformly in
$s\in[0,s^{*}]$, for some $\smin>0$. Under
Assumption~\ref{ass:reg}, there exist
$\eps_{0}\in(0,\smin/\sqrt{2})$ and a constant $\mathcal{C}>0$
depending on $s^{*}$, $\smin$, the Lipschitz constants of $\bdf$ and
$\Gamma$ on a tube around the unregularised trajectory, and
$\sup_{s\in[0,s^{*}]}\norm{E_{0}(s,\cdot)}_{\Hil}$, such that for
every $\eps\in(0,\eps_{0}]$ the regularised solution $E_{\eps}$
exists on $[0,s^{*}]$ and satisfies
\begin{equation}\label{eq:conv-bound}
  \sup_{s\in[0,s^{*}]}\norm{E_{\eps}(s,\cdot)-E_{0}(s,\cdot)}_{\Hil}
  \le\mathcal{C}\,\eps^{2}.
\end{equation}
\end{theorem}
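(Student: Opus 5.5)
The argument is a Gronwall comparison of the two vector fields on a tube around the unregularised trajectory, combined with a bootstrap (continuation) argument that keeps $E_{\eps}$ inside that tube. Write $\mathcal F_{\eps}(E)=S\,g_{0}(E)\sum_{\ell}[\Geps(E)^{-1}]_{0\ell}c_{\ell}(E;\cdot)$, so that $\partial_{s}E_{\eps}=\mathcal F_{\eps}(E_{\eps})$ and $\partial_{s}E_{0}=\mathcal F_{0}(E_{0})$. Let $K=\{E_{0}(s,\cdot):s\in[0,s^{*}]\}$, which is compact, and let $\mathcal T_{r}=\{E:\operatorname{dist}(E,K)\le r\}$ be the tube of radius $r$. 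The difference $w(s)=E_{\eps}(s)-E_{0}(s)$ then satisfies
\[
  \partial_{s}w=\bigl[\mathcal F_{\eps}(E_{\eps})-\mathcal F_{\eps}(E_{0})\bigr]+\bigl[\mathcal F_{\eps}(E_{0})-\mathcal F_{0}(E_{0})\bigr].
\]
The first bracket is a Lipschitz term and the second is a consistency term. I will show that the consistency term is $\mathcal{O}(\eps^{2})$ and that the Lipschitz constant of $\mathcal F_{\eps}$ on $\mathcal T_{r}$ is bounded uniformly in $\eps$.

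\textbf{Step 1 (uniform spectral floor on the tube).} By Assumption~\ref{ass:reg}, the entries of $\Gamma$ are Lipschitz on bounded sets, so $\norm{\Gamma(E)-\Gamma(E_{0}(s))}_{2}\le L_{\Gamma}\norm{E-E_{0}(s)}$. Choose $r$ with $L_{\Gamma}r\le\smin^{2}/2$. Then $\Gamma(E)\succeq\tfrac12\smin^{2}I$ on $\mathcal T_{r}$, and hence $\Geps(E)\succeq\tfrac12\smin^{2}I$ for every $\eps\ge0$. This is where the ratio $\smin/\sqrt2$ in the statement comes from, and it is where $\eps_{0}$ will ultimately be fixed. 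By Theorem~\ref{thm:cond}, $\norm{\Geps(E)^{-1}}_{2}\le 2/\smin^{2}$ uniformly in $\eps$ and in $E\in\mathcal T_{r}$.

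\textbf{Step 2 (Lipschitz bound).} The resolvent identity $A^{-1}-B^{-1}=A^{-1}(B-A)B^{-1}$ together with Step~1 gives
\[
  \norm{\Geps(E)^{-1}-\Geps(E')^{-1}}_{2}\le 4\smin^{-4}L_{\Gamma}\norm{E-E'}
\]
on $\mathcal T_{r}$. Combining this with the Lipschitz continuity of $\bdf$ into $L^{\infty}$, the boundedness of $\bdf$ and $g_{0}$ on the bounded set $\mathcal T_{r}$, and $\norm{S}_{\infty}=1$, yields $\norm{\mathcal F_{\eps}(E)-\mathcal F_{\eps}(E')}\le L_{F}\norm{E-E'}$ with $L_{F}$ independent of $\eps$.

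\textbf{Step 3 (consistency).} With $\Gamma=\Gamma(E_{0}(s))$ we have the exact identity
\[
  \Geps^{-1}-\Gamma^{-1}=-\eps^{2}\,\Geps^{-1}\Gamma^{-1},
\]
so $\norm{\Geps^{-1}-\Gamma^{-1}}_{2}\le\eps^{2}\cdot 2\smin^{-4}$. Therefore $\norm{\mathcal F_{\eps}(E_{0}(s))-\mathcal F_{0}(E_{0}(s))}\le B\eps^{2}$, where $B$ depends on $\smin$ and on $\sup_{K}\bigl(g_{0}\sum_{\ell}\norm{c_{\ell}}\bigr)$.

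\textbf{Step 4 (Gronwall and bootstrap).} As long as $E_{\eps}$ stays in $\mathcal T_{r}$, Steps~2 and~3 give
\[
  \norm{w(s)}\le\int_{0}^{s}\bigl(L_{F}\norm{w}+B\eps^{2}\bigr)\dd\sigma ,
\]
and Gronwall's inequality yields $\norm{w(s)}\le \mathcal C\eps^{2}$ with $\mathcal C=B\,(e^{L_{F}s^{*}}-1)/L_{F}$. I expect the main obstacle to be this continuation step, because Proposition~\ref{prop:wp} only gives existence for $\eps>0$ and says nothing about staying in the tube. To close it, let $\bar s$ be the first exit time of $E_{\eps}$ from $\mathcal T_{r}$. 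On $[0,\bar s]$ the estimate above holds, so choosing $\eps_{0}$ with $\mathcal C\eps_{0}^{2}<r$ (and $\eps_{0}<\smin/\sqrt2$) forces $\norm{w}<r$ up to $\bar s$. This contradicts the definition of $\bar s$ unless $\bar s\ge s^{*}$, which gives existence of $E_{\eps}$ on $[0,s^{*}]$ together with~\eqref{eq:conv-bound}.

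Two points need care. First, the constant $\mathcal C$ must be checked to depend only on the quantities listed in the statement. Second, the $\eps=0$ field $\mathcal F_{0}$ must be verified to be well defined and locally Lipschitz on $\mathcal T_{r}$, which again follows from Step~1.
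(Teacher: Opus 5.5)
Your proposal is correct and follows essentially the same route as the paper. Both arguments combine an $\mathcal{O}(\eps^{2})$ consistency estimate from the resolvent identity along the unregularised trajectory, an $\eps$-uniform Lipschitz bound for the regularised field on a tube where $\Gamma\succeq\tfrac12\smin^{2}I$, and Gr\"onwall, with $\eps_{0}$ chosen so that $\mathcal{C}\eps_{0}^{2}$ stays below the tube radius. Your explicit first-exit-time argument, with the radius $r$ fixed before $\mathcal{C}$, makes rigorous the continuation step that the paper only sketches as ``closing the loop''. It also gives existence on $[0,s^{*}]$ without relying on the global-existence claim of Proposition~\ref{prop:wp}.
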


\begin{proof}
Let $\mathcal F_{\eps},\mathcal F_{0}\colon\Hil\to\Hil$ denote the
right-hand sides of~\eqref{eq:floweps} and~\eqref{eq:flow0},
respectively. We first estimate the difference
$\mathcal F_{\eps}(E_{0})-\mathcal F_{0}(E_{0})$ along the
unregularised trajectory.

\smallskip\noindent\emph{Step 1 (resolvent expansion).}
The resolvent identity gives, with $\Gamma=\Gamma(E_{0})$ and
$\Geps=\Gamma+\eps^{2}I$,
\begin{equation}\label{eq:resolvent}
  \Geps^{-1}-\Gamma^{-1}=-\eps^{2}\,\Gamma^{-1}\Geps^{-1}.
\end{equation}
Since $g_{0}^{(\eps)}=g_{0}=\int S\,c_{0}^{2}\dd t$ when both are
evaluated at the same $E$,
\begin{align*}
  \mathcal F_{\eps}(E_{0})(t)-\mathcal F_{0}(E_{0})(t)
  &=S(t)\,g_{0}(E_{0})\sum_{\ell=0}^{M}
    \bigl([\Geps^{-1}]_{0\ell}-[\Gamma^{-1}]_{0\ell}\bigr)
    c_{\ell}(E_{0};t)\\
  &=-\eps^{2}\,S(t)\,g_{0}(E_{0})\sum_{\ell=0}^{M}
    \bigl[\Gamma^{-1}\Geps^{-1}\bigr]_{0\ell}\,c_{\ell}(E_{0};t).
\end{align*}
Taking $\Hil$-norms and using
$|[\Gamma^{-1}\Geps^{-1}]_{0\ell}|
\le\norm{\Gamma^{-1}}_{2}\norm{\Geps^{-1}}_{2}\le\smin^{-4}$,
\begin{equation}\label{eq:F-eps-zero}
  \norm{\mathcal F_{\eps}(E_{0}(s,\cdot))
        -\mathcal F_{0}(E_{0}(s,\cdot))}_{\Hil}
  \le\eps^{2}\,\frac{C_{1}}{\smin^{4}},
  \qquad s\in[0,s^{*}],
\end{equation}
where $C_{1}=\norm{S}_{\infty}\sup_{s}g_{0}(E_{0}(s,\cdot))
\sum_{\ell}\norm{c_{\ell}(E_{0}(s,\cdot);\cdot)}_{\Hil}$.

\smallskip\noindent\emph{Step 2 (uniform invertibility along
$E_{\eps}$).}
By Assumption~\ref{ass:reg}, the entries of $\Gamma(E)$ are Lipschitz
in $E$ with some constant $L_{\Gamma}>0$ on a bounded tube around the
unregularised trajectory; hence
$\Gamma(E)\succeq(\smin^{2}-L_{\Gamma}\norm{E-E_{0}}_{\Hil}/2)\,I$
for $E$ in that tube. Choosing $\eps_{0}$ such that
$L_{\Gamma}\,\mathcal{C}\,\eps_{0}^{2}\le\smin^{2}/2$ (with
$\mathcal{C}$ to be determined below), the regularised trajectory
$E_{\eps}$ remains in the tube on $[0,s^{*}]$, and
$\Gamma(E_{\eps}(s,\cdot))\succeq(\smin^{2}/2)\,I$ uniformly. The
bounds in Step 1 then transfer to $E_{\eps}$ in place of $E_{0}$.

\smallskip\noindent\emph{Step 3 (Lipschitz dependence on the state).}
On the same tube, $E\mapsto\mathcal F_{\eps}(E)$ is Lipschitz with a
constant $\Lambda>0$ uniform in $\eps\in[0,\eps_{0}]$, since $\bdf$
is Lipschitz in $E$ and $\Geps^{-1}$ depends smoothly on $\Gamma$ on
the positive-definite open set $\{\Gamma\succ(\smin^{2}/2)\,I\}$.

\smallskip\noindent\emph{Step 4 (Gr\"onwall).}
Set $\delta(s):=\norm{E_{\eps}(s,\cdot)-E_{0}(s,\cdot)}_{\Hil}$. Then
\begin{align*}
  \dot{\delta}(s)
  &\le\norm{\mathcal F_{\eps}(E_{\eps})-\mathcal F_{0}(E_{0})}_{\Hil}\\
  &\le\norm{\mathcal F_{\eps}(E_{\eps})-\mathcal F_{\eps}(E_{0})}_{\Hil}
       +\norm{\mathcal F_{\eps}(E_{0})-\mathcal F_{0}(E_{0})}_{\Hil}\\
  &\le\Lambda\,\delta(s)+\eps^{2}\,C_{1}/\smin^{4}.
\end{align*}
With $\delta(0)=0$, Gr\"onwall's inequality yields
$\delta(s)\le(C_{1}/(\Lambda\,\smin^{4}))\,\eps^{2}(e^{\Lambda s}-1)$
for $s\in[0,s^{*}]$. Setting
$\mathcal{C}=C_{1}(e^{\Lambda s^{*}}-1)/(\Lambda\,\smin^{4})$ and
verifying that the choice
$\eps_{0}=\min\{\smin/\sqrt{2},(\smin^{2}/(2L_{\Gamma}\mathcal{C}))^{1/2}\}$
preserves the tube inclusion in Step~2 closes the loop.
\end{proof}

\begin{remark}[Sharpness and saturation of the rate]\label{rem:sharpness}
The rate $\mathcal{O}(\eps^{2})$ is sharp under uniform invertibility:
the leading term in the resolvent expansion~\eqref{eq:resolvent} is
$\eps^{2}\Gamma^{-1}\Geps^{-1}$, generically of order $\eps^{2}$ on
$\R^{M+1}$. The constant $\mathcal{C}$ blows up as
$\smin\to 0^{+}$, which is the analytical counterpart of the
saturation of the $\eps^{2}$-line at moderate $\eps$ observed in
Figure~\ref{fig:thm4}: in the test problem of
Section~\ref{sec:numerics}, $\smin$ is small and the practically
useful regime $\eps\lesssim\smin$ is narrow, so
Theorem~\ref{thm:conv} is informative across roughly the first eight
decades in $\eps$.
\end{remark}

\section{Forward-Euler discretisation}\label{sec:disc}

We now analyse the simplest discretisation of~\eqref{eq:floweps}: the
forward-Euler scheme
\begin{equation}\label{eq:euler}
  E^{k+1}=E^{k}+\Delta s\,v_{\eps}^{k},
  \qquad
  v_{\eps}^{k}(t)
  =S(t)\,g_{0}^{(\eps)}(E^{k})\sum_{\ell=0}^{M}
   \bigl[\Geps(E^{k})^{-1}\bigr]_{0\ell}\,c_{\ell}(E^{k};t).
\end{equation}
The full algorithm is summarised in Algorithm~\ref{alg:reg-dmorph}.
Theorem~\ref{thm:euler} below provides a Courant--Friedrichs--Lewy
(CFL) type criterion that guarantees objective monotonicity at the
discrete level.

\begin{algorithm}[t]
\caption{Tikhonov-regularised projected gradient flow}
\label{alg:reg-dmorph}
\begin{algorithmic}[1]
\Require initial control $E^{0}\in\Mc$, regularisation $\eps\ge 0$,
  step $\Delta s>0$, tolerance $\tau>0$, max iterations $K$.
\For{$k=0,1,\dots,K$}
  \State Solve the forward state equation~\eqref{eq:tdse} with
    $E=E^{k}$ to obtain $\psi(\cdot)$ and $U_{E}(\cdot)$.
  \State Compute the costate $\lambda(\cdot)$ via~\eqref{eq:lambda-def}.
  \State Evaluate the objective gradient $c_{0}^{k}(t)$
    via~\eqref{eq:c0-def} and the constraint gradients $c_{m}^{k}(t)$.
  \State Assemble
    $\Gamma^{k}_{\ell\ell'}=\int_{0}^{T}\!S\,c_{\ell}^{k}\,c_{\ell'}^{k}\dd t$.
  \State Form $\Geps^{k}=\Gamma^{k}+\eps^{2}I$ and solve
    $\Geps^{k}\,x^{k}=e_{0}$.
  \State Set
    $v_{\eps}^{k}(t)
    =S(t)\,g_{0}^{(\eps),k}\sum_{\ell=0}^{M}x^{k}_{\ell}\,c_{\ell}^{k}(t)$.
  \State Update $E^{k+1}=E^{k}+\Delta s\,v_{\eps}^{k}$.
  \If{$|J[E^{k+1}]-J[E^{k}]|\le\tau$}
    \State \textbf{break}
  \EndIf
\EndFor
\State \Return $E^{k+1}$.
\end{algorithmic}
\end{algorithm}

\begin{theorem}[Discrete monotonicity under a CFL condition]
\label{thm:euler}
Suppose $J$ is $C^{2}$ on a neighbourhood $\mathcal{N}\subset\Hil$ of
the iterate $E^{k}$, and that the second variation satisfies
$\norm{\delta^{2}J[E]}_{\Hil\to\Hil}\le G$ for every $E\in\mathcal{N}$.
Let $\alpha\in(0,2)$. If
\begin{equation}\label{eq:CFL}
  \Delta s\,G\,\norm{\Geps(E^{k})^{-1}}_{2}\le\alpha,
\end{equation}
and $\Delta s$ is small enough that $E^{k+1}\in\mathcal{N}$, then
\begin{equation}\label{eq:disc-mono}
  J[E^{k+1}]-J[E^{k}]
  \;\ge\;
  \Delta s\,\Bigl(1-\tfrac{\alpha}{2}\Bigr)\,
  g_{0}^{(\eps)}(E^{k})\,
  \Bigl(1-\eps^{2}\bigl[\Geps(E^{k})^{-1}\bigr]_{00}\Bigr)
  -R_{k},
\end{equation}
with a remainder $|R_{k}|\le K_{3}\Delta s^{3}$ for a constant
$K_{3}$ depending only on the third Fr\'echet derivative bound of $J$
on $\mathcal{N}$ and on $\norm{v_{\eps}^{k}}_{\Hil}$.
\end{theorem}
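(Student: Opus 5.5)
The plan is a second-order Taylor expansion of $J$ along the Euler step $E^{k+1}=E^{k}+\Delta s\,v_{\eps}^{k}$. The first-order term is evaluated exactly with the computation from the proof of Theorem~\ref{thm:mono}. The second-order term is controlled by a Loewner-order estimate on $\norm{v_{\eps}^{k}}_{\Hil}^{2}$ that produces the factor $\norm{\Geps^{-1}}_{2}$ in the CFL condition~\eqref{eq:CFL}. The third-order remainder becomes $R_{k}$. Because the statement refers to a third Fr\'echet derivative bound, I will assume $J$ is $C^{3}$ on $\mathcal{N}$; this holds in the canonical bilinear setting of Remark~\ref{rem:assumption-holds}. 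I will also use that $\mathcal{N}$ contains the segment $[E^{k},E^{k+1}]$, which follows if $\mathcal{N}$ is a ball.

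\emph{Step 1 (expansion).} Write
\[
J[E^{k+1}]-J[E^{k}]=\Delta s\inner{c_{0}^{k}}{v_{\eps}^{k}}
+\tfrac{\Delta s^{2}}{2}\,\delta^{2}J[E^{k}](v_{\eps}^{k},v_{\eps}^{k})+R_{k},
\]
with $|R_{k}|\le\tfrac16\sup_{\mathcal N}\norm{\delta^{3}J}\,\norm{v_{\eps}^{k}}_{\Hil}^{3}\,\Delta s^{3}=:K_{3}\Delta s^{3}$. Exactly as in the proof of Theorem~\ref{thm:mono},
\[
\inner{c_{0}^{k}}{v_{\eps}^{k}}=g_{0}^{(\eps)}\rho_{\eps},\qquad \rho_{\eps}=[\Geps^{-1}\Gamma]_{00}=1-\eps^{2}[\Geps^{-1}]_{00},
\]
with everything evaluated at $E^{k}$.

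\emph{Step 2 (velocity bound).} Put $x=\Geps^{-1}e_{0}$, so that $v_{\eps}^{k}=g_{0}^{(\eps)}S\,\bdf^{\top}x$. Then
\[
\norm{v_{\eps}^{k}}_{\Hil}^{2}=(g_{0}^{(\eps)})^{2}\,x^{\top}Bx,\qquad B_{\ell\ell'}=\int_{0}^{T} S^{2}c_{\ell}c_{\ell'}\dd t .
\]
Since $0\le S\le\norm{S}_{\infty}=1$, we have $S^{2}\le S$, hence $B\preceq\Gamma$. Now $\Gamma$ and $\Geps=\Gamma+\eps^{2}I$ commute and are simultaneously diagonalisable, with eigenvalues $\sigma_{k}^{2}$ and $\sigma_{k}^{2}+\eps^{2}$ by Theorem~\ref{thm:cond}. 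Therefore
\[
\Geps^{-1}\Gamma\Geps^{-1}\preceq\norm{\Geps^{-1}}_{2}\,\Geps^{-1}\Gamma,
\]
and consequently
\[
\norm{v_{\eps}^{k}}_{\Hil}^{2}\le (g_{0}^{(\eps)})^{2}\,\norm{\Geps^{-1}}_{2}\,\rho_{\eps}.
\]

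\emph{Step 3 (absorbing the quadratic term).} The bound $\norm{\delta^{2}J[E^{k}]}\le G$ turns the quadratic term into a negative contribution of size at most
\[
\tfrac{\Delta s^{2}}{2}\,G\,(g_{0}^{(\eps)})^{2}\,\norm{\Geps^{-1}}_{2}\,\rho_{\eps}.
\]
Under the CFL condition~\eqref{eq:CFL}, this is at most $\tfrac{\alpha}{2}\,\Delta s\,g_{0}^{(\eps)}\rho_{\eps}$ multiplied by one leftover factor $g_{0}^{(\eps)}$. Subtracting it from the first-order term gives~\eqref{eq:disc-mono}.

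The main obstacle is this leftover factor $g_{0}^{(\eps)}$. It appears because the velocity field is multiplied by $g_{0}^{(\eps)}$ while the CFL condition~\eqref{eq:CFL} does not contain it. I would resolve it in one of two ways. The first is to read $G$ as a bound on the second variation in the rescaled variable, that is, on $g_{0}^{(\eps)}\norm{\delta^{2}J}$. The second is to use that $g_{0}^{(\eps)}\le\norm{S}_{\infty}\norm{c_{0}}^{2}$ is bounded on $\mathcal{N}$ and normalise so that $g_{0}^{(\eps)}\le 1$, since $c_{0}$ is bounded by $2\norm{\mu}$ because $J\in[0,1]$ and the states are unit vectors. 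I would state this convention explicitly in the proof. Everything else reduces to the commuting Loewner inequality of Step 2 and to standard Taylor remainders.
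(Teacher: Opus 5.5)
Your route is the paper's: a second-order Taylor expansion, the first-order term from the computation in Theorem~\ref{thm:mono}, a Loewner bound on $\norm{v_{\eps}^{k}}_{\Hil}^{2}$, and absorption through the CFL condition. The leftover factor $g_{0}^{(\eps)}$ you flag is real, and it is a defect of the statement, not of your approach.

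The paper's proof meets the same factor and silently drops it. From $-\tfrac12\Delta s^{2}G\,(g_{0}^{(\eps),k})^{2}\norm{\Geps^{-1}}_{2}$ and~\eqref{eq:CFL} it asserts the lower bound $-\tfrac{\alpha}{2}\Delta s\,g_{0}^{(\eps),k}$. What actually follows is $-\tfrac{\alpha}{2}\Delta s\,(g_{0}^{(\eps),k})^{2}$.

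The factor cannot be removed under the listed hypotheses. Take $M=0$, $\eps=0$, $J[E]=\inner{a}{E}-\tfrac{G}{2}\norm{E}^{2}$ and $E^{k}=0$, with $S\equiv1$ on the support of $a$. Then $v=a$, $g_{0}=\norm{a}^{2}$, $\norm{\Gamma^{-1}}_{2}=1/g_{0}$ and $R_{k}=0$, so $J[E^{k+1}]-J[E^{k}]=\Delta s\,g_{0}\bigl(1-\tfrac{G}{2}\Delta s\bigr)$. Condition~\eqref{eq:CFL} permits $\Delta s=\alpha g_{0}/G$, which gives $\Delta s\,g_{0}\bigl(1-\tfrac{\alpha}{2}g_{0}\bigr)$. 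This violates~\eqref{eq:disc-mono} once $g_{0}>1$ and is negative once $g_{0}>2/\alpha$.

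Your first resolution, the hypothesis $\Delta s\,G\,g_{0}^{(\eps)}(E^{k})\,\norm{\Geps(E^{k})^{-1}}_{2}\le\alpha$, is therefore the correct statement. Unlike~\eqref{eq:CFL}, it is invariant under a rescaling of $E$ (at $\eps=0$). With it, your Steps 1--3 give~\eqref{eq:disc-mono} exactly.

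Your second resolution is not an independent argument. The pointwise bound $|c_{0}|\le2\norm{\mu}_{2}$ only gives $g_{0}^{(\eps)}\le4\norm{\mu}_{2}^{2}\int_{0}^{T}S\dd t$, which need not be $\le1$. ``Normalising'' by a change of units for $E$ rescales $\Delta s$, $G$ and $\norm{\Geps^{-1}}_{2}$, and changes the meaning of $\eps$. The net effect is to reproduce the corrected condition.

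Your Step~2 is also sharper than the paper's, and the sharpening is needed. The paper uses $\Geps^{-1}\Gamma\Geps^{-1}\preceq\Geps^{-1}$, so its bound $\norm{v_{\eps}^{k}}_{\Hil}^{2}\le(g_{0}^{(\eps)})^{2}\norm{\Geps^{-1}}_{2}$ carries no factor $\rho_{\eps}$. Even setting the missing $g_{0}^{(\eps)}$ aside, combining this with the first-order term yields only $\Delta s\,g_{0}^{(\eps)}(\rho_{\eps}-\tfrac{\alpha}{2})$. That is strictly below the claimed $\Delta s(1-\tfrac{\alpha}{2})g_{0}^{(\eps)}\rho_{\eps}$ whenever $g_{0}^{(\eps)}>0$ and $\rho_{\eps}<1$, which holds for every $\eps>0$. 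Your inequality $\Geps^{-1}\Gamma\Geps^{-1}\preceq\norm{\Geps^{-1}}_{2}\,\Geps^{-1}\Gamma$ compares the eigenvalues $\sigma_{k}^{2}/(\sigma_{k}^{2}+\eps^{2})^{2}$ and $\sigma_{k}^{2}/\bigl((\smin^{2}+\eps^{2})(\sigma_{k}^{2}+\eps^{2})\bigr)$. It keeps $\rho_{\eps}$ and is exactly what produces the factorised form.

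Finally, you do not need the extra $C^{3}$ assumption. Since $\norm{\delta^{2}J}\le G$ on all of $\mathcal N\supset[E^{k},E^{k+1}]$, the integral form of the second-order Taylor remainder gives $J[E^{k+1}]-J[E^{k}]\ge\Delta s\,g_{0}^{(\eps)}\rho_{\eps}-\tfrac12\Delta s^{2}G\norm{v_{\eps}^{k}}_{\Hil}^{2}$ directly. The result then holds with $R_{k}=0$ for $C^{2}$ functionals.
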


\begin{proof}
A second-order Taylor expansion of $J$ around $E^{k}$ along the
direction $v_{\eps}^{k}$ gives
\begin{equation}\label{eq:taylor}
  J[E^{k+1}]
  =J[E^{k}]+\Delta s\,\inner{c_{0}^{k}}{v_{\eps}^{k}}_{\Hil}
        +\tfrac{1}{2}\Delta s^{2}\,
         \inner{v_{\eps}^{k}}{\delta^{2}J[E^{k}]\,v_{\eps}^{k}}_{\Hil}
        +R_{k},
\end{equation}
with $R_{k}=\mathcal{O}(\Delta s^{3})$. The first-order term equals
$\Delta s\,g_{0}^{(\eps)}(E^{k})
\bigl(1-\eps^{2}[\Geps(E^{k})^{-1}]_{00}\bigr)$ by the calculation
leading to~\eqref{eq:dJ-id}. For the second-order term, expand the
squared $\Hil$-norm of $v_{\eps}^{k}$. Using $S\ge 0$ and
$\norm{S}_{\infty}=1$ together with the matrix identity
$\Geps^{-1}\Gamma\Geps^{-1}\preceq\Geps^{-1}$ (which holds because
$\Gamma\preceq\Geps$ on $\R^{M+1}$),
\begin{align*}
  \norm{v_{\eps}^{k}}_{\Hil}^{2}
  &=(g_{0}^{(\eps),k})^{2}\int_{0}^{T}\!S(t)^{2}
    \biggl(\sum_{\ell}[\Geps^{-1}]_{0\ell}\,c_{\ell}^{k}(t)\biggr)^{\!2}\dd t\\
  &\le(g_{0}^{(\eps),k})^{2}\,\norm{S}_{\infty}\,
       e_{0}^{\top}\Geps^{-1}\Gamma\Geps^{-1}e_{0}
   \le(g_{0}^{(\eps),k})^{2}\,[\Geps^{-1}]_{00}
   \le(g_{0}^{(\eps),k})^{2}\,\norm{\Geps^{-1}}_{2}.
\end{align*}
Combining with $|\inner{v_{\eps}^{k}}{\delta^{2}J\,v_{\eps}^{k}}|\le
G\,\norm{v_{\eps}^{k}}_{\Hil}^{2}$,
$\tfrac{1}{2}\Delta s^{2}\inner{v_{\eps}^{k}}{\delta^{2}J\,v_{\eps}^{k}}_{\Hil}
\ge -\tfrac{1}{2}\Delta s^{2}G\,(g_{0}^{(\eps),k})^{2}\norm{\Geps^{-1}}_{2}$.
Using~\eqref{eq:CFL} as
$\Delta s\,G\,\norm{\Geps^{-1}}_{2}\le\alpha$, the second-order
contribution is bounded below by
$-\tfrac{\alpha}{2}\Delta s\,g_{0}^{(\eps),k}$, which combines with
the first-order term to give~\eqref{eq:disc-mono}. The cubic bound
$|R_{k}|\le K_{3}\Delta s^{3}$ follows from the $C^{3}$-smoothness
of $J$ on $\mathcal{N}$
\cite[Section 3.2]{HinzePinnauUlbrichUlbrich2009}.
\end{proof}

\begin{remark}[Heuristic step-halving as a coarse CFL search]
\label{rem:bisection}
Theorem~\ref{thm:euler} provides a quantitative justification of the
step-size bisection used in \cite{ShuHoRabitz2016,GuoLuoMaShu2019}:
when an iterate decreases $J$, the implementation reduces $\Delta s$
by a fixed factor (often $0.1$). This is a coarse search for a step
size satisfying~\eqref{eq:CFL}. The regularised variant offers a
transparent alternative: one fixes $\eps$, computes
$\norm{\Geps^{-1}}_{2}$ from the Gram matrix at the current iterate,
and chooses $\Delta s$ explicitly to satisfy~\eqref{eq:CFL} with
$\alpha$ slightly less than $2$.
\end{remark}

\begin{remark}[Comparison with classical gradient-flow CFL bounds]
\label{rem:classical}
Condition~\eqref{eq:CFL} is structurally identical to the
forward-Euler stability bound for gradient flows of strongly convex
functionals \cite[Theorem 1.2.4]{StuartHumphries1996}, with the
operator norm of the Hessian replaced by the product
$G\,\norm{\Geps^{-1}}_{2}$. The factor $\norm{\Geps^{-1}}_{2}$
encodes the cost of the projection step; in the unregularised case
it equals the inverse of $\smin^{2}(\Gamma)$, which can be
arbitrarily large in applications.
\end{remark}

\section{Numerical experiments}\label{sec:numerics}

We illustrate the analysis on a three-level bilinear test problem
modelling all-optical generation of a Bell state in two dipole-dipole
coupled atoms.

\subsection{Test problem}

The state space is $\C^{3}$, with basis
$\{|gg\rangle,|s\rangle,|ee\rangle\}$, where
$|s\rangle=(|ge\rangle+|eg\rangle)/\sqrt{2}$ is the symmetric Dicke
state. We set $H_{0}=\diag(-\omega_{0}/2,\,V_{dd},\,\omega_{0}/2)$ and
\begin{equation*}
  \mu=\mu_{d}\begin{pmatrix}0&1&0\\1&0&1\\0&1&0\end{pmatrix},
\end{equation*}
with parameters $\omega_{0}=12578.95\,\mathrm{cm}^{-1}$,
$V_{dd}=12.35\,\mathrm{cm}^{-1}$ and
$\mu_{d}=\sqrt{2}\times 7.61\,\mathrm{D}$ \cite{GuoLuoMaShu2019}.
Atomic units are used throughout. The initial and target states are
$\psi_{0}=(1,0,0)^{\top}=|gg\rangle$ and
$\psi_{f}=(0,1,0)^{\top}=|s\rangle$. The three equality constraints
are
\begin{equation}\label{eq:numerical-constraints}
  h_{1}[E]=\int_{0}^{T}\!E\dd t = 0,\quad
  h_{2}[E]=\int_{0}^{T}\!E^{2}\dd t = C_{2},\quad
  h_{3}[E]=\mu_{d}\!\int_{0}^{T}\!E(t)\cos(\omega_{\mathrm r}t)\dd t = C_{3},
\end{equation}
with reference frequency $\omega_{\mathrm r}=\omega_{0}/2+V_{dd}$.
The time grid contains $N_{t}=4000$ uniform points on
$[-4\tau,4\tau]$ for three pulse durations
$\tau\in\{100,250,400\}\,\mathrm{fs}$, and the gating envelope is
$S(t)=\exp(-t^{2}/(2\tau^{2}))$. The initial control is a
transform-limited Gaussian centred at $\omega_{\mathrm r}$ with area
$\theta_{\mathrm{sg}}=\pi/2$, which fixes the constraint targets
$C_{2}$ and $C_{3}$. Integrals are computed by the composite
trapezoidal rule, the propagator $U_{E}(t)$ is built by exact
eigendecomposition of the Hermitian generator $H_{0}-\mu E(t)$ at
each time slice (equivalent to a zeroth-order Magnus exponential
\cite{BlanesCasasOteoRos2009,IserlesNorsettActa2000}), and the linear
system $\Geps\,x=e_{0}$ is solved by the standard symmetric
positive-definite solver. All computations are performed in double
precision.

\subsection{Baseline behaviour ($\eps=0$)}

Figure~\ref{fig:baseline} reports four diagnostics over the
iterations of the unregularised algorithm at the three pulse
durations: the objective $J$, the relative drift of the fluence
constraint, the absolute pulse area $|h_{1}|$ on logarithmic axes,
and the condition number $\cond(\Gamma(s^{(n)}))$. Two qualitative
features stand out. First, $\cond(\Gamma)$ stays in the band
$10^{9}$--$10^{11}$ throughout every run, so the Gram matrix is
effectively rank-deficient in double precision. Second, the affine
constraints (zero area, fixed reference area) are preserved to
$\sim 10^{-7}$, while the nonlinear fluence constraint drifts by
$20$--$38\%$. This dichotomy is consistent with
Remark~\ref{rem:affine}: only constraints with $E$-dependent
gradients incur a Hessian-driven discretisation drift. Physically,
the fluence drift means that the optimised pulse delivers a larger
energy than the input target $C_{2}$ --- a property of the algorithm,
not of the physics, that would ordinarily go unnoticed without
careful budget tracking.

\begin{figure}[!htbp]
  \centering
  \includegraphics[width=\linewidth]{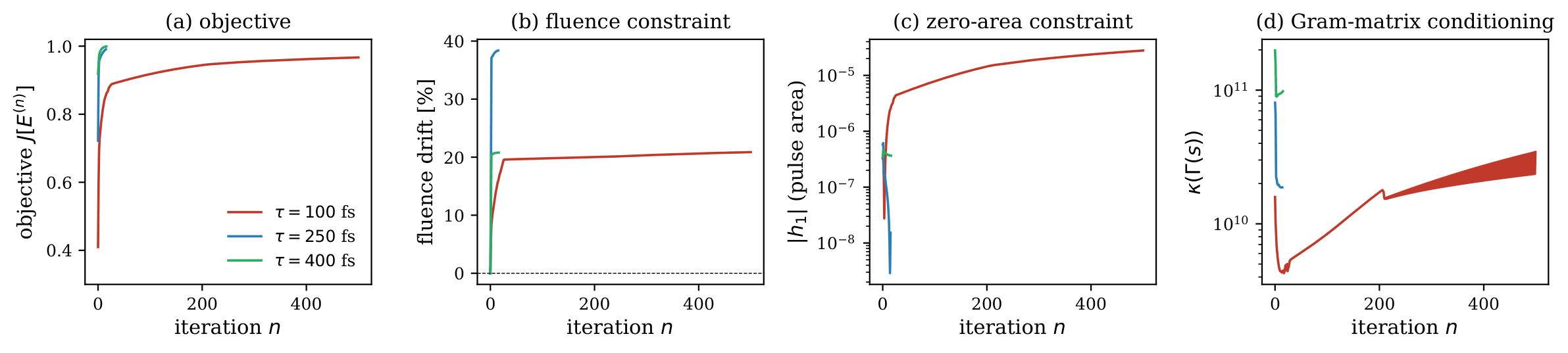}
  \caption{Baseline unregularised flow ($\eps=0$).
    (a) objective $J[E^{(n)}]$;
    (b) relative fluence drift $(h_{2}-h_{2}^{(0)})/h_{2}^{(0)}$;
    (c) $|h_{1}|$ (logarithmic scale);
    (d) Gram-matrix condition number $\cond(\Gamma(s^{(n)}))$.
    The persistence of $\cond(\Gamma)$ in the band
    $10^{9}$--$10^{11}$ across all three pulse durations documents
    the generic ill-conditioning of the unregularised flow.}
  \label{fig:baseline}
\end{figure}

\subsection{Verification of Theorem~\ref{thm:conv}}

Fixing $\tau=250\,\mathrm{fs}$ and running the regularised flow to a
common iteration count, Figure~\ref{fig:thm4} reports the relative
$L^{2}$-distance
$\norm{E_{\eps}(s^{*})-E_{0}(s^{*})}_{L^{2}}/\norm{E_{0}(s^{*})}_{L^{2}}$
against $\eps$. The data lie on a straight line of slope $2$ in
log-log coordinates over eight decades in $\eps$, with a stable
prefactor; the $\mathcal{O}(\eps^{2})$ rate of
Theorem~\ref{thm:conv} is verified empirically. Saturation is
visible for $\eps\gtrsim 10^{-3}$, consistent with the
finite-$\smin$ correction in the constant of
Theorem~\ref{thm:conv}.

\begin{figure}[!htbp]
  \centering
  \includegraphics[width=0.62\linewidth]{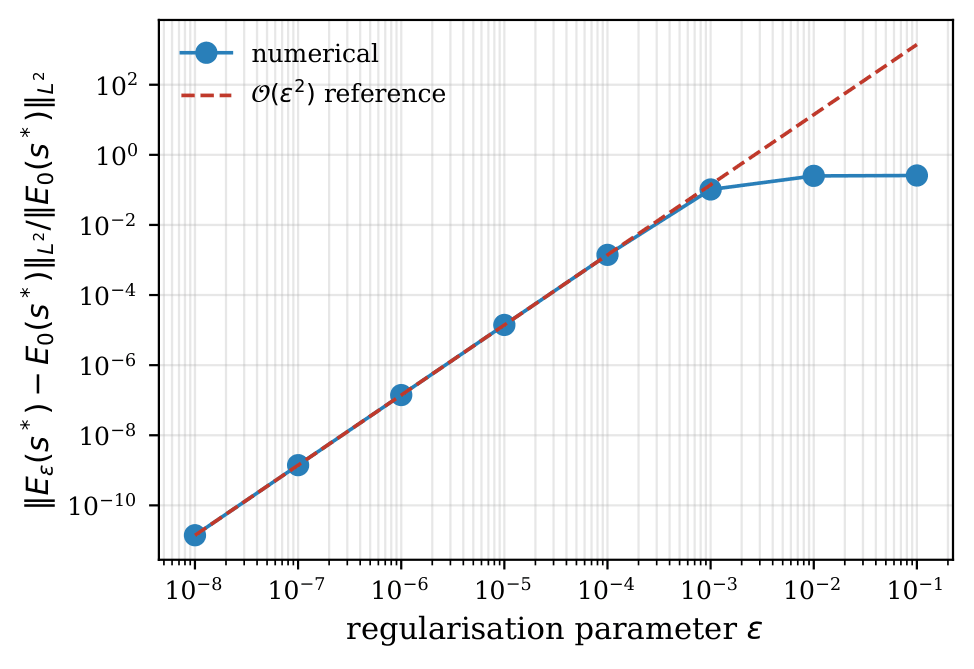}
  \caption{Empirical verification of Theorem~\ref{thm:conv}.
    Markers: $\norm{E_{\eps}(s^{*})-E_{0}(s^{*})}_{L^{2}}
    /\norm{E_{0}(s^{*})}_{L^{2}}$ against $\eps$. Dashed line:
    $\mathcal{O}(\eps^{2})$ reference. The $\eps^{2}$ rate is
    reproduced over eight decades in $\eps$.}
  \label{fig:thm4}
\end{figure}

\subsection{Conditioning--drift trade-off}

Figure~\ref{fig:cond-drift} plots the maximum condition number
$\max_{s}\cond(\Geps(s))$ along the trajectory and the relative
fluence drift at the final iterate, both as functions of $\eps$. The
condition number tracks the bound~\eqref{eq:cond-formula}: it is
$\sim 10^{11}$ for $\eps\lesssim 10^{-4}$ (the small-$\eps$ regime
of Remark~\ref{rem:cond-interp}), decreases as $\eps^{-2}$ for
$\eps\in[10^{-4},10^{-1}]$, and approaches $1$ for larger $\eps$.
The fluence drift, dominated by discretisation error in the
small-$\eps$ regime, decreases to about $3\%$ as $\eps$ grows to
$10^{-2}$.

\begin{figure}[!htbp]
  \centering
  \includegraphics[width=\linewidth]{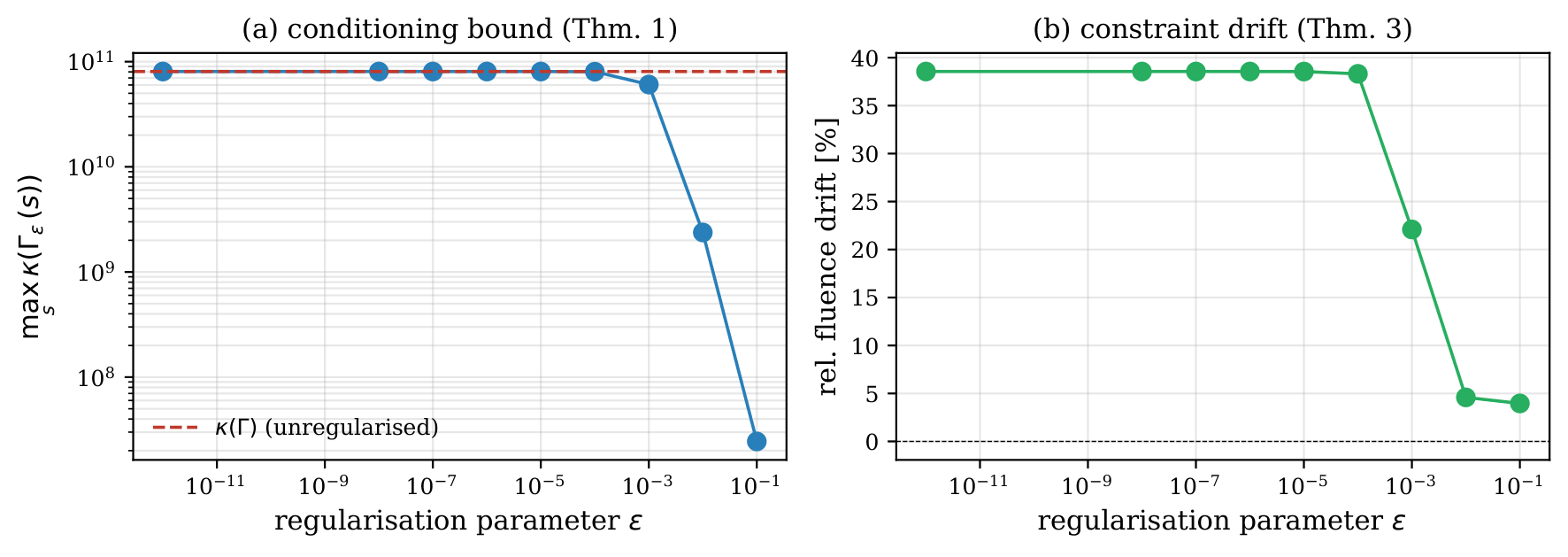}
  \caption{(a) Maximum condition number $\max_{s}\cond(\Geps(s))$
    along the regularised flow versus $\eps$. The plateau at small
    $\eps$ matches $\cond(\Gamma)$ (dashed line). (b) Relative
    fluence drift at the final iterate. The decrease at moderate
    $\eps$ reflects the CFL margin of Theorem~\ref{thm:euler}.}
  \label{fig:cond-drift}
\end{figure}

\subsection{Practical impact of regularisation}

We test the practical impact of regularisation in the
aggressive-step regime. For initial step sizes
$\Delta s\in\{10^{-6},5\cdot 10^{-6},10^{-5},5\cdot 10^{-5},10^{-4}\}$
and $\eps\in\{0,10^{-4},10^{-3},10^{-2}\}$, we record (i) the
iterations to reach $J=0.99$, (ii) the number of step-halving
rejections, and (iii) the accumulated fluence drift
(Figure~\ref{fig:payoff}). At $\Delta s=10^{-6}$, the regularised
algorithm with $\eps=10^{-2}$ converges in six iterations, against
fifteen for the unregularised run. Regularisation with
$\eps=10^{-2}$ eliminates step rejections at $\Delta s=10^{-6}$, in
agreement with Theorem~\ref{thm:euler}: the CFL bound is easier to
satisfy when $\norm{\Geps^{-1}}$ is smaller. The accumulated fluence
drift falls from $38\%$ to $2.8\%$ at the same final fidelity, an
order-of-magnitude reduction. For $\Delta s\gtrsim 5\cdot 10^{-5}$
the bound~\eqref{eq:CFL} is violated for every $\eps$ tested and the
discretisation breaks down, in agreement with
Theorem~\ref{thm:euler}.

\begin{figure}[!htbp]
  \centering
  \includegraphics[width=\linewidth]{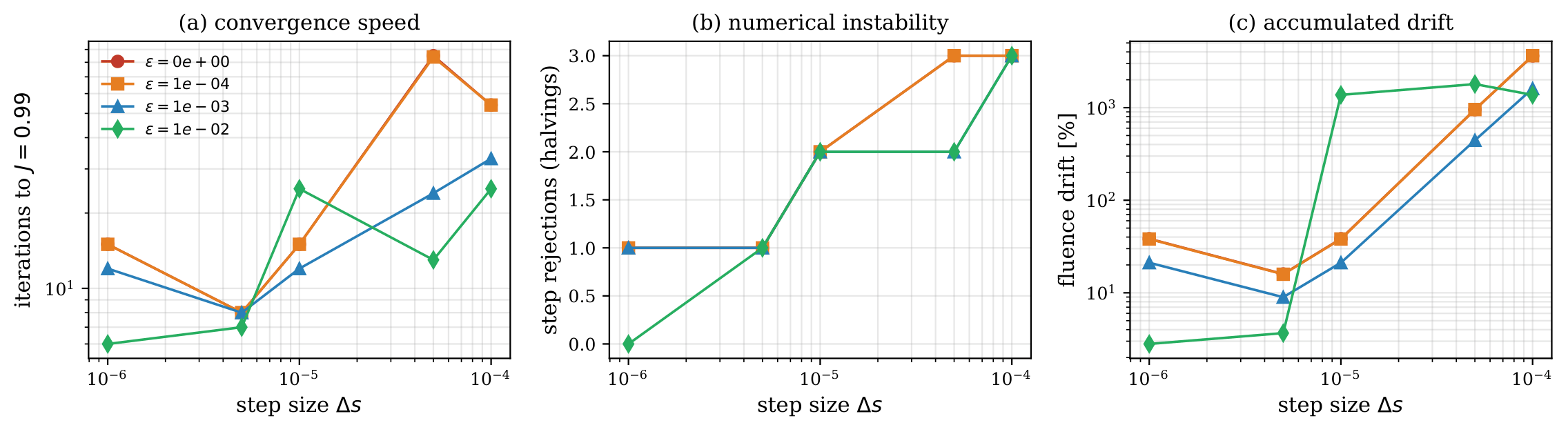}
  \caption{Practical impact of regularisation at
    $\tau=250\,\mathrm{fs}$.
    (a) Iterations to reach $J=0.99$ versus initial step size
        $\Delta s$.
    (b) Total step-halving rejections.
    (c) Relative fluence drift at termination.
    Regularisation with $\eps\in\{10^{-3},10^{-2}\}$ outperforms the
    unregularised baseline on all three metrics; aggressive step
    sizes violate the CFL bound for every $\eps$.}
  \label{fig:payoff}
\end{figure}

\subsection{Final fields}

For completeness, Figure~\ref{fig:fields} compares the initial
control to the final controls produced by $\eps=0$, $\eps=10^{-3}$,
and $\eps=10^{-2}$. All three runs reach $J\ge 0.9999$, with target
fidelity for the Bell state $|s\rangle$ exceeding $99.99\%$. The
$L^{2}$-distances between the regularised and unregularised final
controls are consistent with Theorem~\ref{thm:conv} and
Figure~\ref{fig:thm4}.

\begin{figure}[!htbp]
  \centering
  \includegraphics[width=0.78\linewidth]{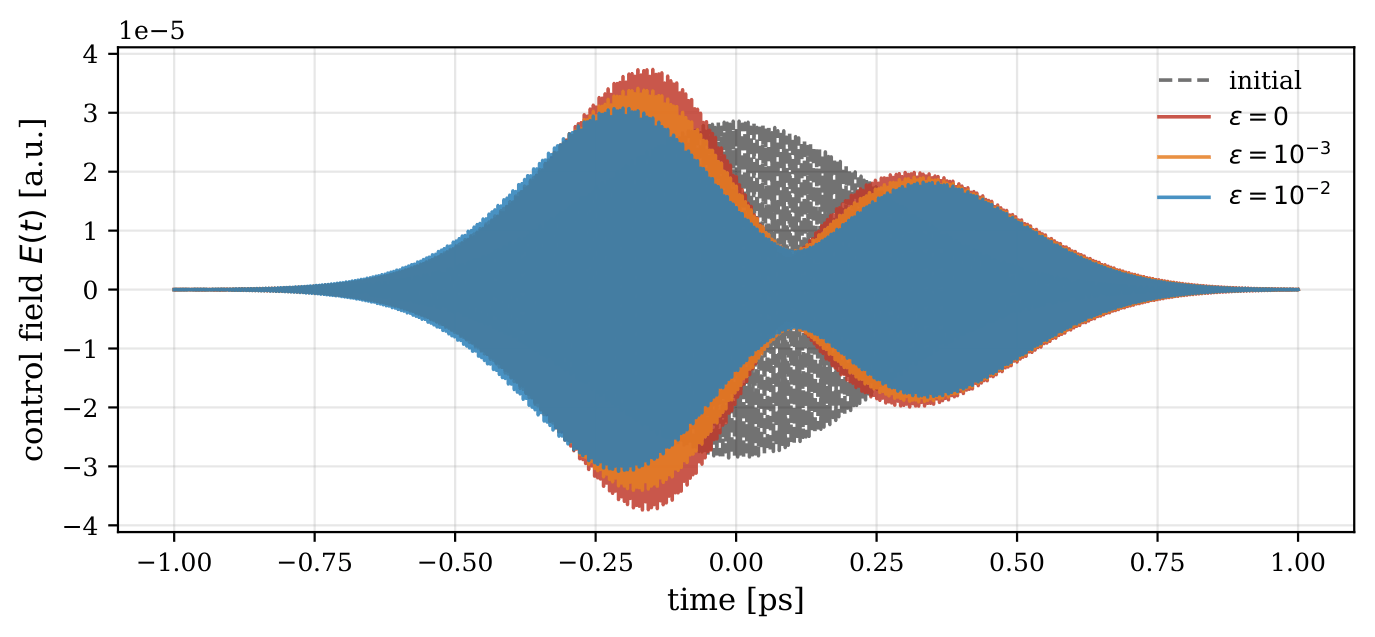}
  \caption{Initial control (dashed) and final controls obtained
    with $\eps=0$, $\eps=10^{-3}$, $\eps=10^{-2}$. All three
    controls produce $J\ge 0.9999$; the $L^{2}$-distances among
    them match the predictions of Theorem~\ref{thm:conv}.}
  \label{fig:fields}
\end{figure}

\section{Discussion}\label{sec:discussion}

The numerical experiments admit a transparent physical reading. The
condition number of the moving Gram matrix measures the degree to
which the constraint gradients $c_{1},c_{2},c_{3}$ become
approximately collinear when restricted to the support of the gating
envelope $S$. In the benchmark of Section~\ref{sec:numerics} the
dominant near-collinearity arises from the interplay between the
zero-area constraint $c_{1}\equiv 1$ and the resonant-area
constraint $c_{3}=\mu_{d}\cos(\omega_{\mathrm r}t)$: when integrated
against $S$, the oscillatory part of $c_{3}$ averages to a small
residual and $c_{1}$ becomes nearly proportional to a slowly varying
mean of $c_{3}$ over the pulse window. This feature is generic for
constraint sets that mix a DC-like component with an oscillatory
component at a finite carrier frequency, and it explains why the
ill-conditioning we document is not a peculiarity of the chosen
parameters but a structural property of three-constraint pulse-design
problems.

Beyond stabilising the discretisation, Tikhonov regularisation has a
direct interpretation as a smooth interpolation between two natural
algorithms. At $\eps=0$ the flow projects the gradient $c_{0}$ onto
the orthogonal complement of the constraint span in the $S$-weighted
inner product; in the limit $\eps\to\infty$ it reduces to ordinary
gradient ascent on $J$, ignoring the constraints altogether. The
intermediate regime $\eps\sim\smin$ realises a partial projection in
which low-energy modes of $\Gamma$ --- combinations of constraint
gradients poorly resolved by the trapezoidal quadrature --- are
damped rather than enforced exactly. The damping introduces a
controllable $\mathcal{O}(\eps^{2})$ constraint drift but eliminates
the discrete instability that would otherwise be triggered by
attempted exact enforcement of an effectively underdetermined linear
system. The trade-off quantified by Theorems~\ref{thm:drift}
and~\ref{thm:conv} is therefore not a workaround: it is a principled
rebalancing of the cost of constraint preservation against the cost
of numerical stability.

For applications in which the unregularised algorithm of
\cite{ShuHoRabitz2016} would be the method of choice ---
all-optical Bell-state preparation \cite{GuoLuoMaShu2019},
phase-locked state transfer \cite{GuoDongShu2018}, and orientation
control with zero-area fields \cite{SugnyConstraints2014} --- our
results suggest two practical recommendations. First, the
unregularised flow should be regarded as a useful continuous-time
abstraction rather than a discrete recipe; any implementation
inevitably involves an ad hoc step-size policy whose effect is hard
to reason about. Second, choosing $\eps$ in the range
$10^{-3}$--$10^{-2}$, suitably non-dimensionalised by the spectral
scale of the problem, gives a transparent CFL bound, eliminates the
overhead of objective-decrease probes, and reduces accumulated
constraint drift by an order of magnitude at unchanged objective
fidelity.

The framework is independent of the bilinear form of the state
equation~\eqref{eq:tdse}: any control problem in which the gradient
$c_{0}$ and the constraint gradients $c_{m}$ are well-defined Hilbert
elements admits the same construction, and the conditioning analysis
is purely linear-algebraic. Application of the regularised scheme to
nonlinear-state, multi-control, and partial-differential-equation-constrained
problems is a natural next step. From the physics side, the recent
rapid progress in neutral-atom quantum technology
\cite{Saffman2016,BrowaeysLahaye2020,Evered2023,Bluvstein2024}, in
which Rydberg-mediated entangling gates with fidelities approaching
$99.5\%$ are now routinely demonstrated, gives the constrained
pulse-design problem renewed practical urgency, and the benchmark of
Section~\ref{sec:numerics} is a stepping-stone toward a fully
analytical treatment of pulse-design problems on those platforms.

\section{Conclusions and outlook}\label{sec:conclusion}

We have analysed the Tikhonov-regularised projected gradient flow
for equality-constrained bilinear optimal control. The analysis
yields five results: an exact spectral identity for the regularised
Gram matrix, leading to an explicit conditioning formula
(Theorem~\ref{thm:cond}); preservation of objective monotonicity for
every $\eps\ge 0$ (Theorem~\ref{thm:mono}); an exact constraint-drift
identity, yielding $\mathcal{O}(\eps^{2})$ drift bounds with
computable prefactors (Theorem~\ref{thm:drift}); $L^{2}$-convergence
of the regularised to the unregularised flow at rate
$\mathcal{O}(\eps^{2})$ under uniform invertibility of $\Gamma$
(Theorem~\ref{thm:conv}); and a discrete CFL-type criterion
$\Delta s\,G\,\norm{\Geps^{-1}}\le\alpha<2$ guaranteeing discrete
objective monotonicity up to $\mathcal{O}(\Delta s^{2})$ local
truncation error (Theorem~\ref{thm:euler}). On the three-level
bilinear benchmark relevant to all-optical Bell-state preparation,
all five predictions are corroborated; the $\mathcal{O}(\eps^{2})$
rate is reproduced over eight decades in $\eps$, and moderate
regularisation eliminates the discrete step rejections observed in
unregularised implementations while reducing the
discretisation-induced constraint drift by an order of magnitude.

Several directions remain open. The uniform-invertibility hypothesis
in Theorem~\ref{thm:conv} is restrictive when $\Gamma$ becomes
near-singular along part of the trajectory; replacing it by an
$s$-dependent regularisation $\eps(s)$ in the spirit of the
Tikhonov--Morozov discrepancy principle
\cite{EnglHankeNeubauer1996,Calvetti2003} should permit a local
analysis. The recent convergence theory for stochastic gradient
methods with decaying regularisation \cite{ControllingTheFlow2025}
suggests template ideas. The CFL bound~\eqref{eq:CFL} uses the
global operator norm $\norm{\Geps^{-1}}$, which is conservative; a
sharper, eigenvector-adapted choice of $\Delta s$ should be possible.
An analysis of higher-order Runge--Kutta integrators
of~\eqref{eq:floweps} would partially compensate the second-order
Taylor remainder in Theorem~\ref{thm:euler}, potentially weakening
the CFL constant. Extension to inequality constraints via active-set
or slack-variable lifts \cite{NocedalWright2006,UlbrichSemismooth2011}
should transfer with minor modification. Application of the
regularised scheme to multi-control problems --- two-frequency or
polarisation-resolved pulse design for Rydberg-mediated gates ---
would extend the practical reach of the analysis to the platforms of
greatest current experimental interest.

\section*{Acknowledgements}
The author thanks the referees and editors of the journal for their
attention. The benchmark problem of Section~\ref{sec:numerics} was
the subject of earlier work \cite{GuoLuoMaShu2019}; the
numerical-analysis viewpoint, the theorems and the implementation
are the author's own.

\section*{Data availability}
The numerical experiments reported in Section~\ref{sec:numerics} are
fully described by Algorithm~\ref{alg:reg-dmorph}, the parameters
listed in~\eqref{eq:numerical-constraints} and the surrounding text,
and the following implementation choices: composite trapezoidal
quadrature on a uniform grid of $N_{t}=4000$ points; exact
eigendecomposition of $H_{0}-\mu E(t)$ at each time slice for the
propagator update; and solution of the linear system
$\Geps\,x=e_{0}$ by the standard symmetric positive-definite solver.
With these specifications, all reported numbers and figures are
reproducible to plotting accuracy.

\bibliographystyle{unsrt}
\bibliography{references}

\appendix

\section{Auxiliary lemmas on the Gram matrix}\label{app:gram}

We collect two elementary lemmas used in the body of the paper,
included for completeness.

\begin{lemma}[Diagonal-entry bounds for $\Geps^{-1}$]
\label{lem:diag-bound}
For every symmetric positive semidefinite
$\Gamma\in\R^{(M+1)\times(M+1)}$ and every $\eps>0$, the diagonal
entries of $\Geps^{-1}=(\Gamma+\eps^{2}I)^{-1}$ satisfy
\begin{equation}\label{eq:diag-bound}
  \frac{1}{\Gamma_{kk}+\eps^{2}}
  \le\bigl[\Geps^{-1}\bigr]_{kk}
  \le\frac{1}{\smin(\Gamma)^{2}+\eps^{2}}
  \qquad\text{for every }k=0,1,\dots,M.
\end{equation}
\end{lemma}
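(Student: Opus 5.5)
Both inequalities follow from the quadratic forms of $\Geps$ and $\Geps^{-1}$, with $e_{k}$ the $k$-th standard basis vector of $\R^{M+1}$ and $[\Geps^{-1}]_{kk}=e_{k}^{\top}\Geps^{-1}e_{k}$.

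\emph{Upper bound.} By Theorem~\ref{thm:cond}, $\Geps$ is symmetric positive definite with smallest eigenvalue $\smin(\Gamma)^{2}+\eps^{2}$. Here $\smin(\Gamma)^{2}$ denotes the smallest eigenvalue of $\Gamma$, i.e.\ the squared smallest singular value of $C_{S}$. Hence $\Geps^{-1}\preceq(\smin^{2}+\eps^{2})^{-1}I$, and the Rayleigh quotient at the unit vector $e_{k}$ gives
\[
e_{k}^{\top}\Geps^{-1}e_{k}\le\norm{\Geps^{-1}}_{2}=\frac{1}{\smin^{2}+\eps^{2}} .
\]

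\emph{Lower bound.} Write $A:=\Geps\succ 0$ and apply Cauchy--Schwarz in the $A$-inner product to $x=A^{-1/2}e_{k}$ and $y=A^{1/2}e_{k}$:
\[
1=(e_{k}^{\top}e_{k})^{2}=\bigl\langle A^{-1/2}e_{k},A^{1/2}e_{k}\bigr\rangle^{2}\le (e_{k}^{\top}A^{-1}e_{k})(e_{k}^{\top}Ae_{k}).
\]
Since $e_{k}^{\top}Ae_{k}=\Gamma_{kk}+\eps^{2}>0$, this gives $[\Geps^{-1}]_{kk}\ge 1/(\Gamma_{kk}+\eps^{2})$. An equivalent route is to diagonalise $A=\sum_{j}\mu_{j}q_{j}q_{j}^{\top}$ and apply Jensen's inequality to the convex map $\mu\mapsto 1/\mu$ with weights $(q_{j}^{\top}e_{k})^{2}$, which sum to one.

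Neither step is a real obstacle. The only point needing care is that $\eps>0$ makes $\Geps$ invertible even when $\Gamma$ is singular. The upper bound then holds with $\smin=0$, and the lower bound is finite because $\Gamma_{kk}+\eps^{2}>0$.
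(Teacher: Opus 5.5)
Your proof is correct and takes the paper's route: the upper bound is the operator-norm (Rayleigh-quotient) bound on a diagonal entry, and the lower bound is the inequality $[A^{-1}]_{kk}\ge 1/A_{kk}$ for symmetric positive-definite $A$, which the paper cites from Horn--Johnson and which your Cauchy--Schwarz (or Jensen) argument proves directly. One cosmetic point: your displayed Cauchy--Schwarz step uses the Euclidean inner product on the pair $A^{-1/2}e_{k}$, $A^{1/2}e_{k}$, not the $A$-inner product your sentence names (that version would pair $A^{-1}e_{k}$ with $e_{k}$), but the computation itself is right.
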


\begin{proof}
The upper bound is the operator-norm bound applied to the diagonal
entry. For the lower bound, observe that
$[\Geps]_{kk}=\Gamma_{kk}+\eps^{2}$ and apply the inequality
$[A^{-1}]_{kk}\ge 1/A_{kk}$, valid for any symmetric positive-definite
$A$ \cite[Section 7.7]{HornJohnson2013}.
\end{proof}

\begin{lemma}[Cauchy--Schwarz on the off-diagonal entry]
\label{lem:cs}
For every symmetric positive-definite
$A\in\R^{(M+1)\times(M+1)}$ and distinct indices $i,j$,
\begin{equation}\label{eq:cs}
  \bigl|[A^{-1}]_{ij}\bigr|
  \le\bigl[A^{-1}\bigr]_{ii}^{1/2}\bigl[A^{-1}\bigr]_{jj}^{1/2}.
\end{equation}
\end{lemma}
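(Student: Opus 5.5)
The plan is to reduce the claim to the Cauchy--Schwarz inequality for the inner product induced by $A^{-1}$. Since $A$ is symmetric positive definite, so is $A^{-1}$: it is symmetric because $(A^{-1})^{\top}=(A^{\top})^{-1}=A^{-1}$, and its eigenvalues are the reciprocals of those of $A$, hence strictly positive. Therefore the bilinear form
\[
  \langle x,y\rangle_{A^{-1}}:=x^{\top}A^{-1}y,\qquad x,y\in\R^{M+1},
\]
is a genuine inner product on $\R^{M+1}$, and the Cauchy--Schwarz inequality holds for it.

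Next I would apply Cauchy--Schwarz to the standard basis vectors $e_{i}$ and $e_{j}$. This gives
\[
  \bigl|[A^{-1}]_{ij}\bigr|=\bigl|e_{i}^{\top}A^{-1}e_{j}\bigr|
  \le (e_{i}^{\top}A^{-1}e_{i})^{1/2}(e_{j}^{\top}A^{-1}e_{j})^{1/2}
  =[A^{-1}]_{ii}^{1/2}[A^{-1}]_{jj}^{1/2},
\]
which is exactly \eqref{eq:cs}.

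An equivalent route is to use the symmetric square root $A^{-1/2}$, which exists by the spectral theorem. Writing $[A^{-1}]_{ij}=\inner{A^{-1/2}e_{i}}{A^{-1/2}e_{j}}_{\R^{M+1}}$ and applying the Euclidean Cauchy--Schwarz inequality gives the same bound. This mirrors the argument already used in the proof of Theorem~\ref{thm:drift}.

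Nothing here is a real obstacle. The only point to check is positive definiteness of $A^{-1}$, which makes the form an inner product and guarantees that the diagonal entries are positive, so the square roots are well defined. The hypothesis that $i$ and $j$ are distinct is not needed, since for $i=j$ the inequality holds with equality.
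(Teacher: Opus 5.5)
Your proof is correct and matches the paper's argument. The paper writes $[A^{-1}]_{ij}=\inner{A^{-1/2}e_{i}}{A^{-1/2}e_{j}}_{\R^{M+1}}$ and applies the Euclidean Cauchy--Schwarz inequality, which is exactly your second route; your first route, using the $A^{-1}$-inner product, is the same inequality without taking the square root, and your remark that distinctness of $i,j$ is not needed is also correct.
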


\begin{proof}
Write $[A^{-1}]_{ij}=e_{i}^{\top}A^{-1}e_{j}
=\inner{A^{-1/2}e_{i}}{A^{-1/2}e_{j}}_{\R^{M+1}}$ and apply
Cauchy--Schwarz.
\end{proof}

Together, Lemmas~\ref{lem:diag-bound} and~\ref{lem:cs} yield the
bound $|[\Geps^{-1}]_{m0}|\le 1/(\smin^{2}+\eps^{2})$ used in the
proof of Theorem~\ref{thm:drift}.

\section{Notational conventions}\label{app:notation}

For the convenience of the reader, we collect the principal
notational conventions used in the paper.
\begin{center}
\begin{tabular}{ll}
\toprule
Symbol & Meaning \\ \midrule
$\Hil=L^{2}(0,T;\R)$ & Hilbert space of admissible controls \\
$\inner{\cdot}{\cdot}$, $\norm{\cdot}$ & inner product and norm on $\Hil$ \\
$\norm{\cdot}_{2}$ & Euclidean / spectral norm \\
$\norm{\cdot}_{\infty}$ & $L^{\infty}$-norm on $[0,T]$ \\
$E$, $E_{\eps}$ & control field; regularised control field \\
$\psi(t)=U_{E}(t)\psi_{0}$ & state at time $t$ \\
$\lambda(t)$ & costate, equation~\eqref{eq:lambda-def} \\
$J[E]=|\inner{\psi_{f}}{U_{E}(T)\psi_{0}}|^{2}$ & terminal fidelity \\
$h_{m}[E]=C_{m}$, $m=1,\dots,M$ & equality constraints \\
$c_{0}=\delta J/\delta E$ & objective gradient \\
$c_{m}=\delta h_{m}/\delta E$ & constraint gradients \\
$\bdf=(c_{0},c_{1},\dots,c_{M})^{\top}$ & gradient column vector \\
$\Mc=\{E\in\Hil:h_{m}[E]=C_{m}\}$ & constraint manifold \\
$S\in C^{2}([0,T])$ & gating envelope, $\norm{S}_{\infty}=1$ \\
$\Gamma$, $\Geps=\Gamma+\eps^{2}I$ & Gram matrix and Tikhonov regularisation \\
$\sigma_{k}$, $\smin$, $\smax$ & singular values of $C_{S}$ \\
$\cond$, $\spec$ & condition number, spectrum \\
$g_{0}(s)=\Gamma_{00}(s)$, $g_{0}^{(\eps)}(s)$ & scalars in the flow \\
$s\in[0,s^{*}]$ & morphing variable \\
$\Delta s$ & discrete step size in $s$ \\
$\eps\ge 0$ & Tikhonov regularisation parameter \\
$\alpha\in(0,2)$ & CFL safety factor \\
$G$ & bound on the second variation of $J$ \\
$\tau$ & pulse duration in the numerical example \\
$T$ & terminal time in the time grid \\
\bottomrule
\end{tabular}
\end{center}

\end{document}